
\documentclass[12pt, onecolumn]{IEEEtran}

\usepackage[T1]{fontenc}
\usepackage{amsmath}
\usepackage{amssymb}
\usepackage{microtype}
\usepackage{color}
\usepackage{bm}
\usepackage{ifpdf}

\ifpdf
\usepackage[pdftex]{graphicx}
\usepackage[update]{epstopdf}
\else
\usepackage{graphicx}
\fi

%


\renewcommand{\baselinestretch}{2}

\begin{document}

\newtheorem{conjecture}{Conjecture}
\newtheorem{rem}{Remark}
\newtheorem{insight}{Insight}
\newtheorem{question}{Question}
\newtheorem{proposition}{Proposition}
\newtheorem{cor}{Corollary}
\newtheorem{lem}{Lemma}
\newtheorem{assumption}{Assumption}
\newtheorem{theorem}{Theorem}
\newtheorem{example}{Example}
\newcommand{\FIM}{\mathbf{F}}
\newcommand{\MFIM}{\mathbf{F}_{\sf{M}}}
\newcommand{\param}{\beta}
\renewcommand{\Re}{\operatorname{Re}}
\renewcommand{\Im}{\operatorname{Im}}
\newcommand{\nuiv}{\mathbf{\widetilde{r}}}
\newcommand{\GCRB}{\mathsf{GCRB}(\phi_{21};\nuiv)}
\newcommand{\MCRB}{\mathsf{MCRB}(\phi_{21};\nuiv)}
\newcommand{\ACRB}{\mathsf{ACRB}(\phi_{21};\nuiv)}
\newcommand{\ACRBoneway}{\mathsf{ACRB}^{\mathsf{1-way}}(\phi_{21};\nuiv)}
\newcommand{\MCRBgen}{\mathsf{MCRB}(\paramreal_i; \{\paramreal_j, j\neq i\})}
\newcommand{\paramreal}{\widetilde{\param}}
\newcommand{\vparamreal}{\widetilde{\bm{\param}}}
\newcommand{\RCSO}{\mathbf{\Xi}} 
\newcommand{\rcso}{\bm{\xi}} 
\newcommand{\SCALE}{0.85}

\title{Carrier Frequency Offset Estimation for Two-Way Relaying: Optimal Preamble and Estimator Design}

\author{Chin Keong Ho, Patrick Ho Wang Fung, and~Sumei Sun%
\thanks{This paper was presented in part at the IEEE Workshop on Signal Processing Advances in Wireless Communications, 2010.}
\thanks{C. K. Ho and S. Sun are with the Institute for Infocomm Research, A*STAR, 1 Fusionopolis Way, \#21-01 Connexis, Singapore 138632 (e-mail: \{hock, sunsm\}@i2r.a-star.edu.sg).}
\thanks{P. H. W. Fung is with the Ngee Ann Polytechnic, Blk 7, \#04-01, 535 Clementi Rd., Singapore 599489. (e-mail: fhw2@np.edu.sg)}
}

\maketitle

\vspace{-1.5cm}

\begin{abstract}
We consider the problem of carrier frequency offset (CFO) estimation for a two-way relaying system based on the amplify-and-forward (AF) protocol. Our contributions are in designing an optimal preamble, and the corresponding estimator, to closely achieve the minimum Cramer-Rao bound (CRB) for the CFO. This optimality is asserted with respect to the novel class of preambles, referred to as the block-rotated preambles (BRPs). This class includes the periodic preamble that is used widely in practice, yet it provides an additional degree of design freedom via a block rotation angle. We first identify the catastrophic scenario of an arbitrarily large CRB when a conventional periodic preamble is used. We next resolve this problem by using a BRP with a non-zero block rotation angle. This angle creates, in effect, an artificial frequency offset that separates the desired relayed signal from the self-interference that is introduced in the AF protocol. With appropriate optimization, the CRB incurs only marginal loss from one-way relaying under practical channel conditions. To facilitate implementation, a specific  low-complexity class of estimators is examined, and conditions for the estimators to achieve the optimized CRB is established. Numerical results are given which corroborate with theoretical findings.
\end{abstract}

\vspace{-0.5cm}

\begin{keywords}
Frequency offset estimation, two-way relaying, preamble design, Cramer-Rao bound.
\end{keywords}

\newpage

\renewcommand{\baselinestretch}{1.75}

\section{Introduction}

Two-way relaying is a spectrally efficient communication technique for two sources to exchange independent data \cite{Katti07, two-way_half_time_slot}. We consider two-way relaying based on analogue network coding, also known as the amplify-and-forward (AF) scheme with two transmission phases. In the first phase, both sources concurrently send their data to the relay, while in the second phase, the relay sends a scaled version of the received signal to both sources. Due to the sharing of spectral resources, the signal sent by one source is delivered not only to the other source, but also back to itself as self-interference through the relay. Assuming each source has knowledge of the two-way relay channel via channel estimation \cite{JiangGao10}, each source can subtract the self-interference and thus can recover the desired signal sent by the other source.

A potential application of two-way relaying for high data-rate transmissions over wireless channels is in multi-carrier systems, such as orthogonal frequency division multiplexing (OFDM) systems. It is well known that the presence of a carrier frequency offset (CFO) between the source and the destination can severely impair system performance. Hence, it is critical to consider the problem of CFO estimation in two-way relaying \cite{CFO_two-way, crucialCFO}, so that the detrimental effect can be mitigated at the receiver. Typically, a preamble is used to facilitate the estimation of the CFO.
In practical implementations, the CFO is estimated without any prior knowledge of the channel as it constitutes the first step in most communication systems, such as in the IEEE 802.11n Standard \cite{802.11n}, before any channel estimation is performed.

Given knowledge of the preambles used, CFO estimation in two-way relaying is fundamentally different from the classical problem of CFO estimation in point-to-point channels, or even in one-way-relaying. This is because in practice the channel is not known exactly and hence the removal of the self-interference is not straightforward.
Such self-interference corrupts the desired relayed signal and causes the CFO estimator to perform poorly. Despite some recent progress in this direction \cite{CFO_two-way, crucialCFO}, the fundamental reason for the loss in performance, if any, is not clear. In this regard, the Cramer-Rao bounds (CRB) serves as an important metric, since it gives the lowest possible variance for any unbiased estimator \cite{Kay}.

To understand the potential problem caused by self-interference, it is insightful to consider the following toy problem. Suppose we wish to estimate the frequencies of two tones received with unknown amplitudes and phases in the presence of additive white Gaussian noise (AWGN). Here the unknown amplitudes and phases represent the channel distortions.
The CRBs for the estimation of both frequencies turn out to be arbitrarily large as the frequencies approach each other \cite{RifeBoorstyn76}.
Compared to the two-way relaying case where one frequency (corresponding to the  CFO due to the desired relayed signal) is unknown while the other frequency (corresponding to the self-interference) is zero, this toy problem is a harder problem because both frequencies are unknown and to be estimated.
However, it captures the essence of the CFO estimation problem in two-way relaying, in that two different signals carried by unknown channels are present.
In fact, we shall see that both problems share the same fundamental limitation, namely that the CRB goes to infinity as the difference in the carrier frequencies approaches zero.
This motivates a re-design of the preamble used for CFO estimation, so as to remove this fundamental limitation.

Although the problem of preamble design in point-to-point channels for CFO estimation has been considered in the literature, e.g. \cite{Minn_cazac}, surprisingly there appears to have no such work in two-way relay systems.
In this paper, we will introduce a novel preamble design that in effect introduces an artificial frequency offset to remove the fundamental limitation that we have identified.

Our specific contributions are as follows. We consider the problem of preamble design and CFO estimation in a two-way relaying system, assuming a  time-invariant multipath wireless channel.
\begin{itemize}
\item We establish that reusing conventional periodic preambles at both sources, such as that used in the IEEE 802.11 standards \cite{802.11n}, can result in an unbounded CRB for the CFO estimator.
\item To overcome the above problem, we propose the novel class of {\em block-rotated preambles} (BRPs) for CFO estimation in two-way relaying. 
    The BRP includes the periodic preamble as a special case, yet provides an additional degree of design freedom via a block rotation angle.
    Intuitively, the block rotation angle introduces an artificial block-level frequency offset that enables the preambles from the two sources to be well separated in the frequency domain.
\item We obtain the CRB based on the class of BRP for the cases where the channel is either known a priori or not. To obtain a closed-form expression, we use an approximation of the CRB to optimize the BRP. Under this approximation, we show that the CRB for two-way relaying can approach the CRB for one-way relaying. i.e., the CRB where no self-interference is present.
\item To facilitate implementation, a specific class of estimators, based on linear filtering followed by conventional CFO estimation used in point-to-point transmission, is proposed. The necessary and sufficient condition for this class to achieve the one-way-relay CRB is derived.
\item Numerical results are obtained which corroborate with our analysis, and which illustrate the tightness of the approximations made for the BRP design.
\end{itemize}

This paper is organized as follows. The system model for the two-way relay is developed in Section~\ref{sec:System-Model}. The BRP is proposed in Section~\ref{sec:Proposed-Preamble-Structure}. The corresponding CRB is obtained in Section~\ref{sec:CRB} assuming some knowledge of the channel is available or none; an approximation of the CRB is also provided.
Next, the BRP is optimized in Section~\ref{sec:opt}. Section~\ref{sec:Effect-of-Filtering} proposes a low-complexity linear filter that does not suffer any loss in the CRB. Simulation results are given in Section~\ref{sec:Simulation-Results-And}, and finally conclusions are made in Section~\ref{sec:Conclusions}.

\emph{Notations}: Boldfaces in capital and small letters are reserved for denoting matrices and vectors respectively. All indices in matrices and vectors start from zero. The symbols $\otimes$, $\star$, $\mathbb{E}_{\mathcal{S}}(\cdot)$, $\mathbf{I}_{n}$ and $\mathbf{0}_{m\times n}$ represent convolution, Kronecker product operation, expectation function over the variables in the set $\mathcal{S}$, $n\times n$ identity matrix, and $m\times n$ null matrix, respectively.
In general, we collect the set of $N$ signals $\{r_n, n=1,\cdots, N\}$ as a column vector $\mathbf{r}=[r_1, \cdots, r_N]^T$.

\section{System Model\label{sec:System-Model}}

\begin{figure}
\centering{}\includegraphics[scale=0.8]{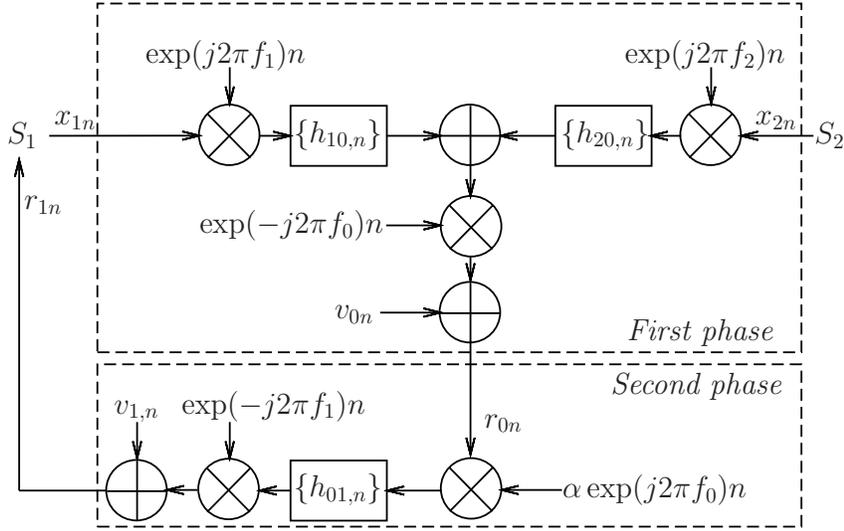}\caption{\label{fig:system model raw}Two-way relay system model from the perspective of source $S_1$.}
\end{figure}

We consider the two-way relay system consisting of one relay and two sources $S_{1}$ and $S_{2}$, referred to by the subscripts $0$, $1$ and $2$, respectively.
The relay and sources $S_{1}$ and $S_{2}$ transmit at carrier frequencies $f_{0}, f_1$ and $f_2$, respectively.
The carrier frequencies are set close to a pre-assigned value, but typically deviate slightly from one another due to hardware imperfections.
For the link from node $i$ to node $j$, we assume a $L$-tap frequency selective channel modeled by the finite-impulse response (FIR) $h_{ij,n}, n=0,\cdots, L-1$. The actual maximum number of the links can be less or equal to $L$ without loss of generality. For notational convenience, we collect the $L$ channel taps for the $ij$th link as $\mathbf{h}_{ij}=[h_{ij,0}, \cdots, h_{ij,L-1}]^T$.

We employ the amplify-and-forward protocol consisting of two phases. Figure \ref{fig:system model raw} illustrates the subsequent processing performed from the perspective of $S_1$.
In the first phase, both $S_{1}$ and $S_{2}$ concurrently transmit their packets $x_{1n} e^{j2\pi f_{1}n}$ and $x_{2n}e^{j2\pi f_{2}n}$ to the relay%
\footnote{If $S_{1}$ and $S_{2}$ do not transmit concurrently but with a small time difference, this delay is easily accommodated by introducing zeros in the first few samples of the channel FIR.}.
Unless otherwise specified, $n$ denotes the discrete time index that runs from $0$ to $N-1$.
Since we focus only on the CFO estimation problem, $\{x_{in}\}$ is taken to be the preamble of source $S_i, i=1,2$.
The relay down-converts the received signal to baseband by taking reference from its carrier frequency $f_0$. Hence the relay receives the baseband signal as
\begin{IEEEeqnarray}{rCl}\label{eqn:phase1rx}
r_{0n} & = & e^{-j2\pi f_{0}n}\left\{ \left(x_{1n}e^{j2\pi f_{1}n}\right)\otimes h_{10,n}+\left(x_{2n}e^{j2\pi f_{2}n}\right)\otimes h_{20,n}\right\} +v_{0n}, n=0,\cdots, N-1,
\end{IEEEeqnarray}
where $v_{0n}$ is the zero-mean AWGN with variance $\sigma^2_0$.
In the second phase, the relay scales $r_{0n}$ such that  $|\sum_{n=1}^N r_{0n}|^2/N$ equals the expected transmit power. We denote the scaling as $\alpha>0$.
Then the relay up-converts the baseband signal to its carrier frequency $f_0$, and broadcasts it to both sources.

Let us focus on the subsequent processing only at source $S_1$. The results at $S_{2}$ can be obtained similarly.
After down-converting the received signal from its carrier frequency $f_1$, source $S_1$ receives in baseband
\begin{IEEEeqnarray}{rCl}
r_{1n} &=& e^{-j2\pi f_{1}n}\left\{ \left(\alpha r_{0n}e^{j2\pi f_{0}n}\right)\otimes h_{01,n}\right\} +v_{1n}\label{eq:r1 version0}
\end{IEEEeqnarray}
where $v_{1n}$ is zero-mean AWGN with variance $\sigma^2_1$.

We have assumed that the relay performs a {\em digital} amplify-and-forward scheme, where the signals is scaled or amplified in the baseband. The above system model also holds if the relay performs an {\em analogue} amplify-and-forward scheme, where all processing is performed instead in the radio-frequency domain.

After some algebraic manipulations, we express \eqref{eq:r1 version0} as the following (more insightful)  signal model:
\begin{IEEEeqnarray}{rCl}
r_{1n} & =& r_{11,n}+e^{j2\pi(f_{2}-f_{1})n}r_{21,n}+u_{1n} \label{eq:r1 version01}
\end{IEEEeqnarray}
In \eqref{eq:r1 version01}, $r_{11,n}  \triangleq   h_{11,n}\otimes x_{1n}$ and
$r_{21,n}  \triangleq   h_{21,n}\otimes x_{2n}$ denote the equivalent received signals
with their equivalent channels given by
%
$ h_{11,n}  \triangleq \alpha e^{-j2\pi f_{1}n}  (h_{01,n} \otimes h_{10,n})$
and $h_{21,n}  \triangleq  \alpha  e^{-j2\pi f_{2}n}  (h_{01,n} \otimes h_{20,n})$, respectively,
while
\begin{IEEEeqnarray}{rCl}\label{eqn:noise}
u_{1n}  \triangleq   e^{j2\pi(f_{0}-f_{1})n} \left\{\left(\alpha  h_{01,n}e^{-j2\pi f_{0}n}\right) \otimes v_{0,n} \right\}+ v_{1n}.
\end{IEEEeqnarray}
Clearly, the equivalent noise $u_{1n}$ depends only on the relay-to-self channel tap $h_{01,n}$.

\section{The Class of Block-Rotated Preambles (BRPs)\label{sec:Proposed-Preamble-Structure}}
Our objective in this paper is to design the preambles $\{x_{1n}\}$
and $\{x_{2n}\}$, so that the CFO $(f_{2}-f_{1})$ is estimated as accurately as possible.
We restrict our study to the class of the block rotated preambles (BRPs) preambles, which is proposed in this section.
As we shall see, the BRP overcomes a significant problem when periodic preambles are used, and when optimized, the BRP approaches the ideal performance where no self-interference is present.

\subsection{Definition}

A BRP $\mathbf{x}=[x_0,\cdots, x_{(M+1)L-1}]^T$ of length $(M+1)L$ is uniquely defined by a {\em basis block} $\mathbf{b} = [b_0, b_2, \cdots, b_{L-1}]^T\in \mathbb{C}^{L\times 1}$ and a {\em block rotation angle} $\theta\in [0, 2\pi)$, according to
\begin{IEEEeqnarray}{rCl}\label{eqn:BRP}
\mathbf{x} = [\mathbf{b}^T, \rho \mathbf{b}^T, \rho^2\mathbf{b}^T, \cdots \rho^{M} \mathbf{b}^T]^T
\end{IEEEeqnarray}
where $\rho\triangleq \exp(j \theta)$.
If $\theta=0$ and so $\rho=1$, the BRP becomes a periodic preamble, which is used widely in conventional point-to-point communication systems, e.g. in \cite{802.11n}. Hence, we may treat the phase rotation $\theta$ as being applied to the periodic preamble on a {\em block level}, i.e., the phase remains constant for every block of $L$ samples then increments by $\theta$ for the next block.

\subsection{Equivalent System Model}

Henceforth, we assume source $S_i$ uses the BRP $\mathbf{x}_i$ with basis block $\mathbf{b}_i$ and block rotation angle $\theta_i$, $i=1,2$.
In practice, a time synchronization algorithm is used to estimate the arrival of the preambles at the receiver. Any timing error does not destroy the block rotation property of the BRP as given in \eqref{eqn:BRP}.
Hence, without loss of generality we assume perfect knowledge on  the arrival of the preambles. To remove any possible inter-symbol interference from other transmissions, we follow the common practice of discarding the first $L$ samples that constitute the first block of preambles.
Thus, we have $N=M L$ samples of received signal left. For notational convenience, we reset the time index $n$ in \eqref{eq:r1 version01} and \eqref{eqn:noise} to start from the second basis block.
From \eqref{eq:r1 version01}, after some algebraic manipulations we obtain the received signal at $S_1$ as
\begin{IEEEeqnarray}{rCl}
\mathbf{r}_{1} & = &
\underbrace{\left[\begin{array}{c}
\rho_{21}\mathbf{I}_{L}\\
\rho_{21}^{2}\mathbf{I}_{L}\\
\vdots\\
\rho_{21}^{M}\mathbf{I}_{L}
\end{array}\right]}_{\mathbf{G}_{21}}
\mathbf{r}_{21}+
\underbrace{\left[\begin{array}{c}
\rho_{11}\mathbf{I}_{L}\\
\rho_{11}^{2}\mathbf{I}_{L}\\
\vdots\\
\rho_{11}^{M}\mathbf{I}_{L}
\end{array}\right]}_{\mathbf{G}_{11}}\mathbf{r}_{11}+
\mathbf{u}_{1},\label{eq:discrete model at S1}
\end{IEEEeqnarray}
where $\rho_{21} \triangleq e^{j\phi_{21}}, \phi_{21}=2\pi(f_{2}-f_{1})L+\theta_{2}, \rho_{11} \triangleq e^{j\phi_{11}}, \phi_{11}=\theta_{1}$. We recall that $N=ML$ and the vector and matrix dimensions are $\mathbf{r}_1, \mathbf{u}_1\in \mathbb{C}^{N\times 1}$, $\mathbf{G}_{11}, \mathbf{G}_{21}\in \mathbb{C}^{N\times L}$, while $\mathbf{r}_{11}, \mathbf{r}_{21}\in \mathbb{C}^{L\times 1}.$

We interpret the terms in the system model \eqref{eq:discrete model at S1} as follows. The first term $\mathbf{G}_{21}\mathbf{r}_{21}$ is the signal vector that contains useful information of the CFO $(f_2-f_1)$ via $\mathbf{G}_{21}$ (which depends on $\phi_{21}$), while $\mathbf{r}_{21}$ is a channel-related nuisance parameter that depends on $\mathbf{h}_{01}, \mathbf{h}_{20}$.
The second term $\mathbf{G}_{11}\mathbf{r}_{11}$ is the self-interference vector at direct current that can potentially interfere with the CFO estimation, where $\mathbf{r}_{11}$ is another channel-related nuisance parameter that depends on $\mathbf{h}_{01}, \mathbf{h}_{10}$.
Subsequent results will support and further clarify the above intuitive view.
Finally, the last term  corresponds to the additive coloured Gaussian noise in \eqref{eqn:noise}, which can be expressed as%
\footnote{We make the dependence of the channel $\mathbf{h}_{01}$ explicit as this leads to the key difference between the two CRBs that we will introduce later.}
\begin{IEEEeqnarray}{rCl}\label{eqn:noisevec}
\mathbf{u}_1 = \mathbf{K}(\mathbf{h}_{01})\mathbf{v}_0 +  \mathbf{v}_1  
\end{IEEEeqnarray}
where 
$\mathbf{K}(\mathbf{h}_{01})\in\mathbb{C}^{N\times{N+L-1}}$ has the $i$th row as $[\mathbf{0}_{1\times (i-1)}, \mathbf{h}_{01}, \mathbf{0}_{1\times (N-i)}]$,
while $\mathbf{v}_0=[v_{0,-L+1}, \cdots, v_{0,N}]^T$ and $\mathbf{v}_1=[v_{1,1}, v_{1,2}, \cdots, v_{1,N}]^T$ are AWGN vectors.
In \eqref{eqn:noisevec}, without loss of generality we have discarded the phase rotation $e^{j2\pi(f_0-f_1)n}$ in \eqref{eqn:noise}, because all random distributions are assumed to be circularly symmetric.
Given $\mathbf{h}_{01}$, the vector $\mathbf{u}_1$ is Gaussian distributed with mean zero and covariance matrix
\begin{IEEEeqnarray}{rCl}\label{eqn:noisecov}
\mathbb{E}[\mathbf{u}_1\mathbf{u}_1^H]\triangleq \mathbf{R}(\mathbf{h}_{01})=  \mathbf{K}(\mathbf{h}_{01})  \mathbf{K}^H(\mathbf{h}_{01})+\mathbf{I}
\end{IEEEeqnarray}
where without loss of generality, we assume noise variances to be one, i.e., $\sigma^2_0=\sigma^2_1=1$.
We note that if $\mathbf{h}_{01}$ is random, then $\mathbf{u}_1$ is no longer Gaussian distributed in general.

\begin{rem}\label{rem:one-way}
If source $S_1$ does not transmit in the first phase, i.e., $x_{1,n}=0$ for all $n$, then we obtain a one-way relaying system. The system model for one-way relaying is thus given by \eqref{eq:discrete model at S1} with $\mathbf{r}_{11}=\mathbf{0}_{L\times 1}$.
\end{rem}

For subsequent derivations, it is convenient to re-write the {\em complex-valued} system model in \eqref{eq:discrete model at S1} as the {\em real-valued} system model:
\begin{IEEEeqnarray}{rCl}\label{eq:discrete model at S1_real}
\mathbf{r}=\mathbf{A}\nuiv+\mathbf{n}
\end{IEEEeqnarray}
where
$\mathbf{r} \triangleq \left[\Re(\mathbf{r}^T_{1}), \Im(\mathbf{r}^T_{1})\right]^T$,
$\mathbf{n} \triangleq \left[\Re(\mathbf{u}^T_{1}), \Im(\mathbf{u}^T_{1})\right]^T$,
$\mathbf{A} \triangleq
\left[\begin{array}{cr}
\Re\left(\left[
\mathbf{G}_{11},  \mathbf{G}_{21}\right]\right) & -\Im\left(\left[
\mathbf{G}_{11},  \mathbf{G}_{21}\right]\right)\\
\Im\left(\left[
\mathbf{G}_{11}, \mathbf{G}_{21}\right]\right) & \Re\left(\left[
\mathbf{G}_{11}, \mathbf{G}_{21}\right]\right)
\end{array}\right],$
and
$\nuiv  \triangleq  \left[\Re\left(\left[\mathbf{r}^T_{11}, \mathbf{r}^T_{21} \right]\right), \Im\left(\left[
\mathbf{r}^T_{11}, \mathbf{r}^T_{21} \right]\right)\right]^T$.

\section{Cramer-Rao Bound (CRB) for Preamble Design}\label{sec:CRB}

The CRB gives the fundamental limit of the variance of any unbiased estimator \cite{Kay}.
We focus on the CRB  for the CFO estimation only at source $S_1$; similar results hold for the CFO estimation at source $S_2$.
For simplicity, we consider the CRB of $\phi_{21}=2\pi(f_{2}-f_{1})L+\theta_{2}$, instead of the CFO $(f_2-f_1)$.
The CFO is related to $\phi_{21}$ by a linear transformation, and hence both CRBs are related simply by a linear transformation \cite{Kay}.

\subsection{General Approach}\label{sec:CRB:approach}

To obtain the true CRB, so-called to distinguish from the CRBs to be introduced, we have to take into account $\nuiv$ which is treated as nuisance parameters.
A closed-form expression for the true CRB however appears to be intractable.
Since our aim is to design preambles, it is more useful to have closed-form expressions based on lower bounds, or under suitable tight approximations, of the true CRB.
To this end, we establish two lower bounds of the true CRB, namely, 
the {\em genie-aided CRB} (GCRB) assuming the channel $\mathbf{h}_{01}$ is known in Section~\ref{sec:fullCSICRB}, and
the {\em modified CRB} (MCRB) \cite{GiniRegMengali} assuming that $\mathbf{h}_{01}$ is not known in Section~\ref{sec:noCSICRB}.
We also obtain the {\em approximate CRB} (ACRB) in closed-form which serve as a good approximation for the MCRB.


%


\subsection{Genie-Aided CRB (with Perfect Knowledge of $\mathbf{h}_{01}$)}\label{sec:fullCSICRB}

Theorem~\ref{thm:The-CRB-for} states the GCRB for the estimation of $\phi$ assuming knowledge of $\mathbf{h}_{01}$ is available.
Although the only desired parameter of interest is $\phi_{21}$,  the nuisance parameters $\nuiv$ are also jointly estimated in the derivations for the GCRB.
The GCRB (of $\phi_{21}$) is thus derived for a given parameter set $\bm{\param}\triangleq [\phi_{21},\nuiv^T]^{T}$, and thus denoted explicitly as $\GCRB$.



%

%
%
%
%
%
\begin{theorem}\label{thm:The-CRB-for}
Assume that $S_{i}$ transmits a BRP with basis block $\mathbf{b}_i$ and block rotation angle $\theta_i$, where $i=1,2$, and $M\geq 3$ basis blocks are used%
\footnote{See Corollary~\ref{cor:infCRB2} later which states that the GCRB for $M=2$ is not well defined.}.
Given the received signal $\mathbf{r}_{1}$ in (\ref{eq:discrete model at S1}) and the CSI $\mathbf{h}_{01}$, the GCRB of $\phi$ at source $S_1$ is 
\begin{IEEEeqnarray}{rCl}
\GCRB
=\left( {2\mathbf{r}_{21}^{H}\mathbf{G}_{21}^{H}\mathbf{T}\mathbf{\Phi}_{1}\mathbf{T}\mathbf{G}_{21}\mathbf{r}_{21}-2\mathbf{r}_{21}^{H}\mathbf{\Psi}_{1}\mathbf{r}_{21}}
\right)^{-1}
\label{eq:CRB complete}
\end{IEEEeqnarray}
where 
$\mathbf{T}\triangleq\mbox{diag}(0,1,2,\ldots M-1)\otimes\mathbf{I}_{L}$
and
\begin{IEEEeqnarray}{rCl}\label{eq:one-way relay condition0}
\mathbf{\Phi}_{1}\triangleq\mathbf{R}^{-1}-\mathbf{R}^{-1}\mathbf{G}_{21}\left(\mathbf{G}_{21}^{H}\mathbf{R}^{-1}\mathbf{G}_{21}\right)^{-1}\mathbf{G}_{21}^{H}\mathbf{R}^{-1},\\
\mathbf{\Psi}_{1}\triangleq\mathbf{G}_{21}^{H}\mathbf{T}\mathbf{\Phi}_{1}\mathbf{G}_{11}\left(\mathbf{G}_{11}^{H}\mathbf{\Phi}_{1}\mathbf{G}_{11}\right)^{-1}\mathbf{G}_{11}^{H}\mathbf{\Phi}_{1}\mathbf{T}\mathbf{G}_{21}.
\label{eq:one-way relay condition}
\end{IEEEeqnarray}
Moreover, $\mathbf{\Phi}_{1}$ satisfies $\mathbf{\Phi}_{1} \mathbf{G}_{21}=\mathbf{0}_{NL\times L}.$
\end{theorem}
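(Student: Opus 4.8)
The plan is to derive the Cram\'er–Rao bound from the Fisher information matrix (FIM) of the joint parameter vector $\bm{\param}=[\phi_{21},\nuiv^T]^T$ for the Gaussian model in \eqref{eq:discrete model at S1_real}, and then to extract the entry of the inverse FIM corresponding to the scalar $\phi_{21}$ via the Schur complement. Since the channel $\mathbf{h}_{01}$ is assumed known, the covariance $\mathbf{R}(\mathbf{h}_{01})$ in \eqref{eqn:noisecov} is a fixed, parameter-independent matrix. This places us in the classical setting of a linear Gaussian model with deterministic unknown mean and known covariance, for which the FIM is given by the Slepian–Bang formula: each block is $[\FIM]_{kl}=2\,\Re\!\big(\partial_k \bm{\mu}^H\,\mathbf{R}^{-1}\,\partial_l \bm{\mu}\big)$, where $\bm{\mu}=\mathbf{G}_{21}\mathbf{r}_{21}+\mathbf{G}_{11}\mathbf{r}_{11}$ is the signal mean.

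First I would compute the three groups of partial derivatives of $\bm{\mu}$. The derivative with respect to $\phi_{21}$ acts only through $\mathbf{G}_{21}$; since the $m$th block of $\mathbf{G}_{21}$ is $\rho_{21}^{m}\mathbf{I}_L$ with $\rho_{21}=e^{j\phi_{21}}$, differentiation brings down a factor $jm$ per block, which is precisely the action of $j\mathbf{T}$ with $\mathbf{T}=\operatorname{diag}(0,1,\ldots,M-1)\otimes\mathbf{I}_L$. Hence $\partial_{\phi_{21}}\bm{\mu}=j\mathbf{T}\mathbf{G}_{21}\mathbf{r}_{21}$. The derivatives with respect to the real and imaginary parts of the nuisance vectors $\mathbf{r}_{11},\mathbf{r}_{21}$ are simply the columns of $\mathbf{G}_{11}$ and $\mathbf{G}_{21}$ (with the appropriate $1$ or $j$ factor from the real/imaginary split encoded in $\mathbf{A}$). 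This yields a partitioned FIM with a scalar $\phi_{21}\phi_{21}$ block equal to $2\mathbf{r}_{21}^H\mathbf{G}_{21}^H\mathbf{T}\mathbf{R}^{-1}\mathbf{T}\mathbf{G}_{21}\mathbf{r}_{21}$, cross-blocks coupling $\phi_{21}$ to the nuisance parameters through $\mathbf{T}\mathbf{G}_{21}\mathbf{r}_{21}$ against $[\mathbf{G}_{11},\mathbf{G}_{21}]$, and a nuisance–nuisance block built from $[\mathbf{G}_{11},\mathbf{G}_{21}]^H\mathbf{R}^{-1}[\mathbf{G}_{11},\mathbf{G}_{21}]$.

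The core step is the Schur-complement inversion: $\GCRB=\big([\FIM]_{\phi_{21}\phi_{21}}-\FIM_{\phi_{21},\nuiv}\,\FIM_{\nuiv\nuiv}^{-1}\,\FIM_{\nuiv,\phi_{21}}\big)^{-1}$. I expect the main obstacle to be showing algebraically that subtracting the correction term collapses the nuisance block-inverse into exactly the two-stage projection structure $\mathbf{\Phi}_1$ and $\mathbf{\Psi}_1$. The trick is to invert the nuisance block by first projecting out the $\mathbf{G}_{21}$ columns (those nuisances are shared with the signal of interest), producing the oblique projector $\mathbf{\Phi}_1=\mathbf{R}^{-1}-\mathbf{R}^{-1}\mathbf{G}_{21}(\mathbf{G}_{21}^H\mathbf{R}^{-1}\mathbf{G}_{21})^{-1}\mathbf{G}_{21}^H\mathbf{R}^{-1}$ as in \eqref{eq:one-way relay condition0}; the residual coupling through the self-interference columns $\mathbf{G}_{11}$ then yields the second correction $\mathbf{\Psi}_1$ in \eqref{eq:one-way relay condition}. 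Carrying out this block elimination cleanly — keeping track of which Gram matrices cancel — is the delicate bookkeeping, but it is mechanical once the projector identity is recognised.

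Finally, the closing claim $\mathbf{\Phi}_1\mathbf{G}_{21}=\mathbf{0}$ is immediate and I would verify it directly: right-multiplying the definition \eqref{eq:one-way relay condition0} gives $\mathbf{R}^{-1}\mathbf{G}_{21}-\mathbf{R}^{-1}\mathbf{G}_{21}(\mathbf{G}_{21}^H\mathbf{R}^{-1}\mathbf{G}_{21})^{-1}(\mathbf{G}_{21}^H\mathbf{R}^{-1}\mathbf{G}_{21})=\mathbf{R}^{-1}\mathbf{G}_{21}-\mathbf{R}^{-1}\mathbf{G}_{21}=\mathbf{0}$. This confirms that $\mathbf{\Phi}_1$ annihilates the $\mathbf{G}_{21}$ subspace, which is the structural reason the first term in \eqref{eq:CRB complete} survives while the nuisance contribution of $\mathbf{r}_{21}$ is projected away, and it also motivates the role of $M\geq3$: one needs enough blocks for the Gram matrix $\mathbf{G}_{21}^H\mathbf{R}^{-1}\mathbf{G}_{21}$ and the self-interference Gram $\mathbf{G}_{11}^H\mathbf{\Phi}_1\mathbf{G}_{11}$ to be invertible, which is exactly what Corollary~\ref{cor:infCRB2} addresses.
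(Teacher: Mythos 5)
Your proposal follows essentially the same route as the paper's proof: the paper likewise forms the joint FIM for the linear Gaussian model with known covariance $\mathbf{R}(\mathbf{h}_{01})$ (working in the equivalent real-valued representation, so its $2\,\mathbf{q}^{H}\mathbf{W}\mathbf{q}$-type blocks are exactly your $2\Re(\partial_k\bm{\mu}^{H}\mathbf{R}^{-1}\partial_l\bm{\mu})$ Slepian--Bang blocks with $\partial_{\phi_{21}}\bm{\mu}=j\mathbf{T}\mathbf{G}_{21}\mathbf{r}_{21}$), isolates the $(1,1)$ entry of the inverse FIM by the block matrix inversion lemma, performs the same block elimination that splits the joint weighted projection onto $[\mathbf{G}_{11},\mathbf{G}_{21}]$ into the $\mathbf{G}_{21}$ term plus the $\mathbf{\Phi}_{1}$-weighted $\mathbf{G}_{11}$ term yielding $\mathbf{\Psi}_{1}$, and verifies $\mathbf{\Phi}_{1}\mathbf{G}_{21}=\mathbf{0}$ directly from the definition, just as you do. One minor correction to your closing aside: the condition $M\geq 3$ is not about invertibility of the Gram matrices (for $M=2$, $\mathbf{G}_{11}^{H}\mathbf{\Phi}_{1}\mathbf{G}_{11}$ is generically still invertible); rather, Corollary~1 shows that at $M=2$ the Schur complement itself, i.e., the denominator of \eqref{eq:CRB complete}, vanishes identically, which is why the GCRB is undefined there --- but this does not affect the validity of your derivation.
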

\begin{proof}
See Appendix~\ref{app:Proof-of-Theorem 1}.
\end{proof}




\begin{rem}\label{rem:boundedcond}
In the GCRB, the nuisance parameters are treated as deterministic.
The true CRB, however, treats $\nuiv$ as random.
A lower bound of the true CRB is given by the extended Miller-and-Chang bound (EMCB) $\mathbb{E}_{\nuiv}[\GCRB]$, i.e., the expectation of the GCRB over the nuisance parameters $\nuiv$ \cite{GiniReggiannini}.
%
%
The EMCB is obtained assuming the channel $\mathbf{h}_{01}$ is known.
This bound is also a lower bound for the true CRB assuming $\mathbf{h}_{01}$ is not known, since this knowledge can always be discarded even if available to give the same estimator performance. 
\end{rem}

From Remark~\ref{rem:boundedcond}, Theorem~\ref{thm:The-CRB-for} provides a lower bound for the true CRB, whether $\mathbf{h}_{01}$ is known or not. 
Next, Theorem~\ref{cor:infCRB} states the necessary condition for the true CRB to be bounded.


\begin{theorem}
\label{cor:infCRB}
Assume the same conditions as in Theorem~\ref{thm:The-CRB-for} with $S_{1}$ and $S_{2}$ transmitting periodic
(but possibly different) preambles, i.e., $\theta_{1}=\theta_{2}=0$.
Then the GCRB, and also the true CRB whether the channel $\mathbf{h}_{01}$ is known or not, are unbounded as  $(f_{2}-f_{1})\rightarrow 0$ for any $\mathbf{h}_{01}$.
\end{theorem}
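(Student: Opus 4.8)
The plan is to prove that the Fisher information for $\phi_{21}$, namely the quantity inverted in \eqref{eq:CRB complete}, tends to zero as $f_2-f_1\to 0$, so that $\GCRB\to\infty$; the corresponding statement for the true CRB then follows from Remark~\ref{rem:boundedcond}. Setting $\theta_1=\theta_2=0$ gives $\phi_{11}=0$, hence $\rho_{11}=1$ and $\mathbf{G}_{11}$ reduces to a vertical stack of $M$ copies of $\mathbf{I}_L$, while $\phi_{21}=2\pi(f_2-f_1)L\to 0$ and $\rho_{21}=e^{j\phi_{21}}\to 1$. The conceptual heart of the matter is that $\mathbf{G}_{21}\to\mathbf{G}_{11}$: the signal subspace collapses onto the self-interference subspace, so the CFO becomes indistinguishable from a nuisance direction.

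First I would recast the Fisher information geometrically. Writing $\mathbf{w}\triangleq\mathbf{T}\mathbf{G}_{21}\mathbf{r}_{21}$ and using that $\mathbf{\Phi}_1=\mathbf{\Phi}_1^H\succeq\mathbf{0}$ together with the definition of $\mathbf{\Psi}_1$ in \eqref{eq:one-way relay condition}, the bracket in \eqref{eq:CRB complete} equals $2\bigl(\mathbf{w}^H\mathbf{\Phi}_1\mathbf{w}-\mathbf{w}^H\mathbf{\Phi}_1\mathbf{G}_{11}(\mathbf{G}_{11}^H\mathbf{\Phi}_1\mathbf{G}_{11})^{-1}\mathbf{G}_{11}^H\mathbf{\Phi}_1\mathbf{w}\bigr)$. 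This is a Schur complement and therefore equals $2\min_{\mathbf{c}}\|\mathbf{w}-\mathbf{G}_{11}\mathbf{c}\|_{\mathbf{\Phi}_1}^2$, the squared $\mathbf{\Phi}_1$-seminorm distance from $\mathbf{w}$ to the range of $\mathbf{G}_{11}$. Since $\mathbf{\Phi}_1\mathbf{G}_{21}=\mathbf{0}$ by Theorem~\ref{thm:The-CRB-for}, subtracting any $\mathbf{G}_{21}\mathbf{c}'$ leaves this seminorm unchanged, so the Fisher information is exactly twice the squared $\mathbf{\Phi}_1$-distance from $\mathbf{w}$ to the joint range $R\triangleq\operatorname{range}[\mathbf{G}_{11},\mathbf{G}_{21}]$.

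The crux is then to bound this distance by exhibiting one explicit element of $R$ close to $\mathbf{w}$. The identity $\partial\mathbf{G}_{21}/\partial\phi_{21}=j(\mathbf{T}+\mathbf{I})\mathbf{G}_{21}$ gives $\mathbf{w}+\mathbf{G}_{21}\mathbf{r}_{21}=-j\,\mathbf{d}$, where $\mathbf{d}\triangleq\partial(\mathbf{G}_{21}\mathbf{r}_{21})/\partial\phi_{21}$ is the tangent of the frequency curve $\phi\mapsto\mathbf{g}(\phi)$ whose $p$th block is $e^{jp\phi}\mathbf{r}_{21}$. Crucially, because $\theta_1=0$ the curve satisfies $\mathbf{g}(0)=\mathbf{G}_{11}\mathbf{r}_{21}\in R$, while $\mathbf{g}(\phi_{21})=\mathbf{G}_{21}\mathbf{r}_{21}\in R$; hence the secant $\phi_{21}^{-1}(\mathbf{G}_{21}-\mathbf{G}_{11})\mathbf{r}_{21}$ lies in $R$ for every $\phi_{21}\neq0$, and a block-by-block Taylor expansion of $e^{jp\phi_{21}}$ shows it differs from $\mathbf{d}$ by $O(\phi_{21})\,\|\mathbf{r}_{21}\|$ in Euclidean norm. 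Because $\mathbf{G}_{21}\mathbf{r}_{21}\in R$ and $R$ is a complex subspace, the $\mathbf{\Phi}_1$-distance from $\mathbf{w}$ to $R$ equals that from $\mathbf{d}$ to $R$; and since $\mathbf{\Phi}_1\preceq\mathbf{R}^{-1}\preceq\mathbf{I}$ (as $\mathbf{R}=\mathbf{K}(\mathbf{h}_{01})\mathbf{K}^H(\mathbf{h}_{01})+\mathbf{I}\succeq\mathbf{I}$), this distance is dominated by the Euclidean gap. Thus the Fisher information is $O(\phi_{21}^2)\to0$ whenever $\mathbf{r}_{21}\neq\mathbf{0}$, for any fixed $\mathbf{h}_{01}$, giving $\GCRB\to\infty$.

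I expect the main obstacle to be conceptual rather than computational: both the target $\mathbf{w}$ and the subspace-and-metric encoded by $\mathbf{\Phi}_1$ degenerate simultaneously as $\phi_{21}\to0$ (the range of $\mathbf{G}_{21}$ merges into that of $\mathbf{G}_{11}$ and $\mathbf{G}_{11}^H\mathbf{\Phi}_1\mathbf{G}_{11}\to\mathbf{0}$), so one cannot simply take limits term by term in \eqref{eq:CRB complete}; the tangent-versus-secant device sidesteps this $0/0$ behaviour by supplying an in-range approximant valid for all $\phi_{21}\neq0$. It is worth stressing where the hypotheses enter: $\theta_2=0$ forces $\phi_{21}\to0$ as $f_2-f_1\to0$ so the two secant endpoints coalesce, and $\theta_1=0$ places $\mathbf{g}(0)$ exactly in $\operatorname{range}(\mathbf{G}_{11})\subseteq R$; a nonzero block rotation angle breaks either condition, which is precisely the mechanism the BRP will later exploit. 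Finally, to pass from the GCRB to the true CRB I would invoke Remark~\ref{rem:boundedcond}: the true CRB dominates the EMCB $\mathbb{E}_{\nuiv}[\GCRB]$ whether or not $\mathbf{h}_{01}$ is known, and since $\GCRB\to\infty$ for almost every realization of $\nuiv$, Fatou's lemma forces the EMCB, and hence the true CRB, to diverge as well.
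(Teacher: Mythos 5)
Your proof is correct, and it reaches the conclusion by a genuinely different route from the paper's (Appendix~\ref{app:infCRB}). The paper works directly on the matrix expression \eqref{eq:CRB complete}: it Taylor-expands $\mathbf{G}_{11}=\left(\mathbf{I}-j(\phi_{21}-\phi_{11})\mathbf{T}\right)\mathbf{G}_{21}+\mathcal{O}\left((\phi_{21}-\phi_{11})^{2}\right)$, uses $\mathbf{\Phi}_{1}\mathbf{G}_{21}=\mathbf{0}$ to extract the leading orders $\mathbf{G}_{11}^{H}\mathbf{\Phi}_{1}\mathbf{G}_{11}=(\phi_{21}-\phi_{11})^{2}\mathbf{G}_{21}^{H}\mathbf{T}\mathbf{\Phi}_{1}\mathbf{T}\mathbf{G}_{21}+\mathcal{O}\left((\phi_{21}-\phi_{11})^{3}\right)$ and $\mathbf{G}_{11}^{H}\mathbf{\Phi}_{1}\mathbf{T}\mathbf{G}_{21}=j(\phi_{21}-\phi_{11})\mathbf{G}_{21}^{H}\mathbf{T}\mathbf{\Phi}_{1}\mathbf{T}\mathbf{G}_{21}+\mathcal{O}\left((\phi_{21}-\phi_{11})^{2}\right)$, and concludes $\mathbf{\Psi}_{1}=\mathbf{G}_{21}^{H}\mathbf{T}\mathbf{\Phi}_{1}\mathbf{T}\mathbf{G}_{21}+\mathcal{O}(|\phi_{21}-\phi_{11}|)$, so that the two terms of the Fisher information cancel in the limit. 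That computation pushes an expansion through the inverse of $\mathbf{G}_{11}^{H}\mathbf{\Phi}_{1}\mathbf{G}_{11}$, a matrix that itself vanishes like $(\phi_{21}-\phi_{11})^{2}$ --- precisely the $0/0$ degeneracy you flag --- and it implicitly needs $\mathbf{G}_{21}^{H}\mathbf{T}\mathbf{\Phi}_{1}\mathbf{T}\mathbf{G}_{21}$ to be invertible, which is where $M\geq3$ enters. Your variational reformulation (Fisher information as twice the squared $\mathbf{\Phi}_{1}$-distance from $\mathbf{T}\mathbf{G}_{21}\mathbf{r}_{21}$ to the joint range, via the Schur complement and $\mathbf{\Phi}_{1}\mathbf{G}_{21}=\mathbf{0}$) together with the secant approximant avoids inverting the degenerate Gram matrix altogether: you only need one in-range vector close to the target plus the crude bounds $\mathbf{\Phi}_{1}\preceq\mathbf{R}^{-1}\preceq\mathbf{I}$. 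What each buys: the paper's argument is shorter and stays inside the algebra of Theorem~\ref{thm:The-CRB-for}; yours is more robust near the singular point, gives the sharper rate (Fisher information $\mathcal{O}(\phi_{21}^{2})$, hence $\GCRB$ diverging at least like $\phi_{21}^{-2}$, versus the $\mathcal{O}(|\phi_{21}-\phi_{11}|)$ bound the paper's bookkeeping yields), and isolates exactly where $\theta_{1}=0$ (placing $\mathbf{g}(0)$ in $\operatorname{range}(\mathbf{G}_{11})$) and $\theta_{2}=0$ (making the secant endpoints coalesce as $f_{2}\rightarrow f_{1}$) are used. Your passage to the true CRB is also more explicit: the paper simply cites Remark~\ref{rem:boundedcond}, while you supply the Fatou step that converts pointwise divergence of $\GCRB$ over $\nuiv$ into divergence of the EMCB. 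The only addition worth making to your write-up is the (shared, minor) caveat that the Schur-complement identity, like \eqref{eq:CRB complete} itself, presumes $\mathbf{G}_{11}^{H}\mathbf{\Phi}_{1}\mathbf{G}_{11}$ is invertible at the points $\phi_{21}\neq\phi_{11}$ along which the limit is taken; this holds for $M\geq3$ but is left implicit in both treatments.
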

\begin{proof}
Appendix~\ref{app:infCRB} proves that $\GCRB\rightarrow\infty$ as $(f_{2}- f_{1})\rightarrow 0$ for any $\nuiv$.
By Remark~\ref{rem:boundedcond}, the true CRB is also unbounded asymptotically, whether the channel $\mathbf{h}_{01}$ is known or not,
\end{proof}

Theorem~\ref{cor:infCRB} shows that using periodic preambles (even different ones) at both sources leads to an unbounded true CRB if the carrier frequencies of these two sources are the same.
This result suggests that the problem of CFO estimation in two-way relay systems is similar in nature to the problem where two carrier frequencies are present and their values have to be estimated; in the latter problem, the CRBs of estimating the two carrier frequencies is arbitrarily large as the frequencies approach each other \cite{RifeBoorstyn76}.

In practice, the CFO $(f_{2}-f_{1})$ approaches zero by design but may not be exactly zero. Since the GCRB is a continuous function of the CFO, the CRB will still be large if the CFO is small. Thus, reusing conventional periodic preambles at both sources can lead to the potentially catastrophic scenario where the CFO effectively cannot be estimated, as confirmed also by numerical results in Section~\ref{sec:Simulation-Results-And}.

Theorem~\ref{thm:The-CRB-for} assumes that $M\geq 3$ basis blocks are used. Corollary~\ref{cor:infCRB2} shows that the GCRB is not well defined if $M=2$.

\begin{cor}\label{cor:infCRB2}
Assume the same conditions as in Theorem~\ref{thm:The-CRB-for}.
Then the GCRB is undefined if $M=2$ for any $\mathbf{h}_{01}$.
\end{cor}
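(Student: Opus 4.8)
The plan is to specialize the Fisher information in the denominator of \eqref{eq:CRB complete} to $M=2$ and show that it either vanishes identically or involves a non-invertible matrix, so that $\GCRB$ cannot be evaluated in either case. First I would record the structural simplifications at $M=2$: here $N=2L$, $\mathbf{T}=\mathrm{diag}(0,1)\otimes\mathbf{I}_{L}$, and both $\mathbf{G}_{11}$ and $\mathbf{G}_{21}$ are $2L\times L$ matrices of full column rank $L$, since each contains an invertible diagonal block $\rho_{i1}\mathbf{I}_{L}$. Because $\mathbf{R}\triangleq\mathbf{R}(\mathbf{h}_{01})=\mathbf{K}\mathbf{K}^{H}+\mathbf{I}\succeq\mathbf{I}$ is Hermitian positive definite, I would factor $\mathbf{R}=\mathbf{R}^{1/2}\mathbf{R}^{1/2}$ and write $\mathbf{\Phi}_{1}=\mathbf{R}^{-1/2}\mathbf{Q}\mathbf{R}^{-1/2}$, where $\mathbf{Q}\triangleq\mathbf{I}-\mathbf{P}$ and $\mathbf{P}$ is the orthogonal projector onto $\mathrm{range}(\mathbf{R}^{-1/2}\mathbf{G}_{21})$; this is consistent with the identity $\mathbf{\Phi}_{1}\mathbf{G}_{21}=\mathbf{0}$ established in Theorem~\ref{thm:The-CRB-for}. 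The key observation is a rank count: as $\mathbf{G}_{21}$ has full column rank $L$, we have $\mathrm{rank}(\mathbf{P})=L$ and hence $\mathrm{rank}(\mathbf{Q})=N-L=L$ exactly when $M=2$.

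Next I would rewrite the denominator. Setting $\mathbf{z}\triangleq\mathbf{R}^{-1/2}\mathbf{T}\mathbf{G}_{21}\mathbf{r}_{21}$ and $\tilde{\mathbf{G}}_{11}\triangleq\mathbf{R}^{-1/2}\mathbf{G}_{11}$, and using that $\mathbf{T}$ is real diagonal and $\mathbf{Q}=\mathbf{Q}^{H}=\mathbf{Q}^{2}$, the denominator of \eqref{eq:CRB complete} becomes, up to the factor $2$,
\begin{IEEEeqnarray}{rCl}
\mathbf{z}^{H}\mathbf{Q}\mathbf{z}-\mathbf{z}^{H}\mathbf{Q}\tilde{\mathbf{G}}_{11}\left(\tilde{\mathbf{G}}_{11}^{H}\mathbf{Q}\tilde{\mathbf{G}}_{11}\right)^{-1}\tilde{\mathbf{G}}_{11}^{H}\mathbf{Q}\mathbf{z}. \nonumber
\end{IEEEeqnarray}
I would then split into two cases according to whether $\tilde{\mathbf{G}}_{11}^{H}\mathbf{Q}\tilde{\mathbf{G}}_{11}=\mathbf{G}_{11}^{H}\mathbf{\Phi}_{1}\mathbf{G}_{11}$ is invertible. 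If it is singular, the inverse appearing in $\mathbf{\Psi}_{1}$ does not exist and $\GCRB$ is undefined directly. If it is invertible, then $\mathbf{Q}\tilde{\mathbf{G}}_{11}$ has full column rank $L$; but its columns all lie in $\mathrm{range}(\mathbf{Q})$, which also has dimension $L$, so $\mathrm{range}(\mathbf{Q}\tilde{\mathbf{G}}_{11})=\mathrm{range}(\mathbf{Q})$. Consequently $\mathbf{Q}\tilde{\mathbf{G}}_{11}(\tilde{\mathbf{G}}_{11}^{H}\mathbf{Q}\tilde{\mathbf{G}}_{11})^{-1}\tilde{\mathbf{G}}_{11}^{H}\mathbf{Q}$ is precisely the orthogonal projector onto $\mathrm{range}(\mathbf{Q})$, i.e.\ it equals $\mathbf{Q}$ itself. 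The two terms above then cancel, the denominator is identically zero, and $\GCRB$ is again undefined as a division by zero.

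Combining the two cases shows that for $M=2$ the bound is undefined for every $\mathbf{h}_{01}$ (equivalently every $\mathbf{R}$), since the rank argument never uses the specific value of $\mathbf{R}^{1/2}$, only its invertibility. I expect the main obstacle to be the second case: one must argue carefully that the collapse of the Schur-complement projector onto $\mathbf{Q}$ hinges entirely on the rank coincidence $\mathrm{rank}(\mathbf{Q})=L$, matching the $L$ columns of $\mathbf{G}_{11}$, which is special to $M=2$. For $M\geq3$ one has $\mathrm{rank}(\mathbf{Q})=(M-1)L>L$, the inclusion $\mathrm{range}(\mathbf{Q}\tilde{\mathbf{G}}_{11})\subsetneq\mathrm{range}(\mathbf{Q})$ is strict, the projector no longer equals $\mathbf{Q}$, and the denominator is generically positive, which is consistent with the well-definedness asserted in Theorem~\ref{thm:The-CRB-for}. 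The remaining points to verify, namely the full column rank of $\mathbf{G}_{11}$ and $\mathbf{G}_{21}$ and the Hermitian idempotence of $\mathbf{Q}$, are immediate.
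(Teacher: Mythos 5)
Your proof is correct, and it reaches the paper's conclusion by a genuinely different route. The paper's Appendix~C argues by explicit construction: it completes $\mathbf{G}_{21}$ to the unitary matrix $\mathbf{P}=\frac{1}{2}[\mathbf{G}_{21}\ \breve{\mathbf{G}}_{21}]$ with $\breve{\mathbf{G}}_{21}=[\mathbf{I}_L,\,-\rho_{21}\mathbf{I}_L]^T$, uses the eigendecomposition of $\mathbf{\Phi}_1$ together with the annihilation property $\mathbf{\Phi}_1\mathbf{G}_{21}=\mathbf{0}$ to obtain the factorization $\mathbf{\Phi}_1=\breve{\mathbf{G}}_{21}\mathbf{B}_{21}\breve{\mathbf{G}}_{21}^H$, then computes $\mathbf{G}_{11}^{H}\mathbf{\Phi}_{1}\mathbf{G}_{11}=\bigl|1-e^{j(\phi_{21}-\phi_{11})}\bigr|^{2}\mathbf{B}_{21}$ and verifies by direct substitution that $\mathbf{\Psi}_1=\mathbf{G}_{21}^{H}\mathbf{T}\mathbf{\Phi}_{1}\mathbf{T}\mathbf{G}_{21}$, so the denominator of $\GCRB$ vanishes identically. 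You instead whiten by $\mathbf{R}^{-1/2}$, recognize $\mathbf{\Phi}_1=\mathbf{R}^{-1/2}\mathbf{Q}\mathbf{R}^{-1/2}$ with $\mathbf{Q}$ a Hermitian idempotent of rank $N-L$, and close the argument by pure rank counting: when $M=2$ one has $N-L=L$, so $\mathrm{range}(\mathbf{Q}\tilde{\mathbf{G}}_{11})=\mathrm{range}(\mathbf{Q})$ whenever $\tilde{\mathbf{G}}_{11}^{H}\mathbf{Q}\tilde{\mathbf{G}}_{11}$ is invertible, the Schur-complement projector collapses to $\mathbf{Q}$ itself, and the two quadratic forms cancel. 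Each approach buys something. Yours is coordinate-free, isolates the exact mechanism (the rank coincidence $(M-1)L=L$), immediately explains why $M\geq 3$ escapes the collapse via the strict inclusion $(M-1)L>L$, and — a point where your write-up is actually more complete than the paper's — explicitly disposes of the degenerate case where $\mathbf{G}_{11}^{H}\mathbf{\Phi}_{1}\mathbf{G}_{11}$ is singular (e.g., $\phi_{21}=\phi_{11}$), which the paper's computation implicitly assumes away when it inverts $\mathbf{B}_{21}$. The paper's explicit construction, in exchange, produces the concrete scalar $\bigl|1-e^{j(\phi_{21}-\phi_{11})}\bigr|^{2}$, which identifies $\breve{\mathbf{G}}_{21}$ as an explicit basis of the orthogonal complement of $\mathrm{range}(\mathbf{G}_{21})$ and quantifies how the rotation-angle difference separates the self-interference — information your abstract argument does not surface. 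Both proofs establish the corollary for every $\mathbf{h}_{01}$, since neither uses more than the invertibility of $\mathbf{R}(\mathbf{h}_{01})$.
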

\begin{proof}
See Appendix~\ref{app:infCRB2}
\end{proof}

Corollary~\ref{cor:infCRB2} suggests that the minimum number of basis blocks to be used is three. We note that Corollary~\ref{cor:infCRB2} holds for any BRP, including (conventional) periodic preambles. This is somewhat surprising, since for point-to-point and even one-way relay systems, two blocks of periodic preambles are sufficient for CFO estimation, see for example \cite{MM}.
Intuitively,
this is because the degrees of freedom are insufficient when $M=2$. Specifically, each of the two carrier frequencies is corrupted by a complex-valued attenuation, which result in a total of six (desired or nuisance) real-valued parameters. Consider the extreme case of a flat-fading channel and $L=1$ symbol is present in each basis block. Then $M=2$ basis blocks give only two complex-valued received signals, or four real-valued received signals, which are insufficient to estimate all the six parameters. On the other hand, $M=3$ basis blocks give just sufficient number of received signals to estimate all six parameters.

In view of Corollary~\ref{cor:infCRB2}, we assume henceforth that $M\geq 3$. Since $N=ML$ and $L\geq 1$, we have $N\geq 3$.

\subsection{Modified CRB (without Knowledge of $\mathbf{h}_{01}$)}\label{sec:noCSICRB}

The GCRB in (\ref{eq:CRB complete}) assumes knowledge of the channel $\mathbf{h}_{01}$, which  provides some insights to the true CRB.
In this section, we assume that $\mathbf{h}_{01}$ is not known but random with some known distribution, which is a more reasonable assumption in practice.
Similar to Section~\ref{sec:fullCSICRB}, we assume a given (deterministic) parameter set $\bm{\param}$ comprising the parameter of interest $\phi_{21}$ and the nuisance parameters $\nuiv$.
We shall employ the MCRB \cite{GiniRegMengali} to give a lower bound for the true CRB. 

Before specializing to our system, let us consider the following general real-valued system model:
\begin{IEEEeqnarray}{rCl}\label{eqn:signalmodel_corrnoise}
\mathbf{r}=\mathbf{z}(\vparamreal) + \mathbf{H} \bm{\epsilon}
\end{IEEEeqnarray}
where $\mathbf{r}, \mathbf{z}$ and $\bm{\epsilon}$ are of length $n$, and $\mathbf{H}$ is a square matrix.
In \eqref{eqn:signalmodel_corrnoise}, $\mathbf{r}$ is the received signal, $\mathbf{z}(\vparamreal)=[{z}_1(\vparamreal), \cdots, {z}_N(\vparamreal)]$ depends only on the length-$p$ parameter vector $\vparamreal=[\paramreal_1, \cdots, \paramreal_p]$, while $\mathbf{H}$ and $\bm{\epsilon}$ are random with known distributions.

\begin{lem}\label{lem:1}
Given $\mathbf{r}$ in \eqref{eqn:signalmodel_corrnoise}, a lower bound for the variance of any unbiased estimator for $\paramreal_i$, where $i=1,\cdots,p$, is given by the MCRB
\begin{IEEEeqnarray}{rCl}\label{eqn:MCRB}
\MCRBgen= \left[\MFIM^{-1}\right]_{ii}
\end{IEEEeqnarray}
where $\MFIM=\mathbb{E}_{\mathbf{r}}
\left[\dfrac{\partial g(\vparamreal)}{\partial\vparamreal}
\dfrac{\partial g(\vparamreal)}{\partial\vparamreal^T}\right] $
is the modified FIM, $g(\vparamreal)= \log f_{\mathbf{r}| \mathbf{H}}(\mathbf{r};\vparamreal|\mathbf{H})$ is the conditional log-likelihood function, and $[\mathbf{X}]_{ii}$ denotes the $i$th diagonal element of the matrix $\mathbf{X}$.
If the following assumptions hold:
\begin{itemize}
\item[$A1$:] $\mathbf{H}$ is full rank with probability one;
\item[$A2$:] the elements of $\bm{\epsilon}$ are i.i.d. (not necessarily Gaussian distributed);
\item[$A3$:] $\bm{\epsilon}$ and $\mathbf{H}$ are independent of each other, and also both
are independent of $\vparamreal$,
\end{itemize}
then the modified FIM simplifies as
%
\begin{IEEEeqnarray}{rCl}\label{eqn:FIM2}
\MFIM(\vparamreal) 
&=& \gamma  {\mathbf{Z}'}(\vparamreal) \mathbf{\Gamma} {\mathbf{Z}'}^T(\vparamreal)
\label{eqn:FIM2b}
\end{IEEEeqnarray}
where $\mathbf{Z}'$ is a $p$-by-$N$ matrix with the $(i,j)$th element as $\partial z_j({\bm \paramreal})/\partial \paramreal_i$.
In \eqref{eqn:FIM2b}, $\gamma$ is a scalar that depends only on the distribution of $\bm{\epsilon}$, while $\mathbf{\Gamma}\triangleq \mathbb{E}_{\mathbf{H}}\left[\left(\mathbf{H} \mathbf{H}^H\right)^{-1}\right]$ depends only on the distribution of $\mathbf{H}$.
\end{lem}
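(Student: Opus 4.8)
My plan is to split the proof into two independent parts: first establishing that $\MCRBgen$ is a genuine lower bound on the estimator variance (the inequality content of \eqref{eqn:MCRB}), and then deriving the closed form \eqref{eqn:FIM2b} under $A1$–$A3$.

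For the bound, the plan is to compare the modified FIM $\MFIM$ with the \emph{true} FIM $\FIM$ built from the marginal density $f_{\mathbf{r}}(\mathbf{r};\vparamreal)=\mathbb{E}_{\mathbf{H}}[f_{\mathbf{r}|\mathbf{H}}(\mathbf{r};\vparamreal|\mathbf{H})]$. Differentiating under the integral sign and applying Bayes' rule (legitimate because, by $A3$, the law of $\mathbf{H}$ does not depend on $\vparamreal$) yields the identity that the marginal score is the posterior mean of the conditional score,
\[
\frac{\partial \log f_{\mathbf{r}}}{\partial \vparamreal}=\mathbb{E}_{\mathbf{H}|\mathbf{r}}\!\left[\frac{\partial g(\vparamreal)}{\partial \vparamreal}\right].
\]
Writing $\mathbf{s}\triangleq \partial g/\partial\vparamreal$, the conditional Jensen inequality $\mathbb{E}_{\mathbf{H}|\mathbf{r}}[\mathbf{s}\mathbf{s}^T]\succeq \mathbb{E}_{\mathbf{H}|\mathbf{r}}[\mathbf{s}]\,\mathbb{E}_{\mathbf{H}|\mathbf{r}}[\mathbf{s}^T]$ followed by an outer expectation over $\mathbf{r}$ and the tower property gives $\MFIM\succeq\FIM$ in the positive-semidefinite order. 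Inverting reverses the order, so $[\MFIM^{-1}]_{ii}\le[\FIM^{-1}]_{ii}$; since the right-hand side is the true CRB, which bounds the variance of any unbiased estimator from below, so does $[\MFIM^{-1}]_{ii}$.

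For the closed form I would exploit the linear-in-noise structure of \eqref{eqn:signalmodel_corrnoise}. Conditioned on $\mathbf{H}$, the change of variables $\mathbf{e}=\mathbf{H}^{-1}(\mathbf{r}-\mathbf{z}(\vparamreal))$ (well defined by $A1$) gives $f_{\mathbf{r}|\mathbf{H}}=|\det\mathbf{H}|^{-1}\prod_{k}p_\epsilon(e_k)$, where $p_\epsilon$ is the common marginal of the i.i.d. entries of $\bm{\epsilon}$ ($A2$). Taking logs and differentiating, the $|\det\mathbf{H}|$ term drops out and $\partial\mathbf{e}/\partial\paramreal_i=-\mathbf{H}^{-1}\,\partial\mathbf{z}/\partial\paramreal_i$, so the gradient factorizes as $\partial g/\partial\vparamreal=-\mathbf{Z}'(\vparamreal)\,\mathbf{H}^{-T}\bm{\psi}(\mathbf{e})$, where $\bm{\psi}(\mathbf{e})$ collects the per-entry scores $p_\epsilon'(e_k)/p_\epsilon(e_k)$ and, at the true parameter, $\mathbf{e}=\bm{\epsilon}$. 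Forming the outer product, pulling out the deterministic $\mathbf{Z}'$, and using $A3$ to factor the expectation over $\mathbf{H}$ and $\bm{\epsilon}$ gives $\MFIM=\mathbf{Z}'\,\mathbb{E}_{\mathbf{H}}[\mathbf{H}^{-T}\,\mathbb{E}_{\bm{\epsilon}}[\bm{\psi}\bm{\psi}^T]\,\mathbf{H}^{-1}]\,\mathbf{Z}'^T$. The i.i.d. property together with the zero-mean score $\mathbb{E}[p_\epsilon'(\epsilon)/p_\epsilon(\epsilon)]=0$ forces $\mathbb{E}_{\bm{\epsilon}}[\bm{\psi}\bm{\psi}^T]=\gamma\mathbf{I}_N$ with $\gamma=\mathbb{E}[(p_\epsilon'(\epsilon)/p_\epsilon(\epsilon))^2]$; recognizing $\mathbf{H}^{-T}\mathbf{H}^{-1}=(\mathbf{H}\mathbf{H}^H)^{-1}$ for the real matrix $\mathbf{H}$ then collapses the $\mathbf{H}$-expectation into $\mathbf{\Gamma}$, yielding \eqref{eqn:FIM2b}.

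I expect the delicate points to be the regularity conditions that justify interchanging differentiation and integration — both in the score identity of the first part and in the vanishing of the mean score in the second — which I would invoke as the standard Cramer-Rao smoothness assumptions on $p_\epsilon$ and $f_{\mathbf{r}|\mathbf{H}}$. The other care point is ensuring the cross terms vanish when factoring the outer-product expectation; this is precisely what the independence in $A3$ buys, with $A1$ guaranteeing $\mathbf{H}^{-1}$ exists almost surely and $A2$ supplying the diagonal structure of $\mathbb{E}_{\bm{\epsilon}}[\bm{\psi}\bm{\psi}^T]$.
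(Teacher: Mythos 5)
Your proof is correct. For the simplification \eqref{eqn:FIM2b} --- the only part the paper actually proves itself (its Appendix on Lemma~1) --- your argument is essentially identical to the paper's: the change of variables $\mathbf{e}=\mathbf{H}^{-1}\left(\mathbf{r}-\mathbf{z}(\vparamreal)\right)$ whose Jacobian $|\det\mathbf{H}|$ drops out upon differentiation, the factorization $\partial g/\partial\vparamreal=-\mathbf{Z}'(\vparamreal)\,\mathbf{H}^{-T}\bm{\psi}(\bm{\epsilon})$, the elimination of cross terms via the zero-mean property of the per-sample score, and the factoring of the expectation through independence ($A3$) to collapse the middle matrix into $\gamma\bm{\Gamma}$. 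Where you genuinely differ is the inequality \eqref{eqn:MCRB}: the paper disposes of it in one sentence by citing Gini--Reggiannini--Mengali, whereas you prove it from scratch --- the marginal score as the posterior mean of the conditional score (valid since, by $A3$, the law of $\mathbf{H}$ is free of $\vparamreal$), conditional Jensen to get $\MFIM\succeq\FIM$, and order reversal under inversion to conclude $[\MFIM^{-1}]_{ii}\le[\FIM^{-1}]_{ii}\le\operatorname{var}(\hat{\paramreal}_i)$. That argument is correct and is in fact the standard proof in the cited reference, so your write-up buys self-containedness where the paper defers to the literature; the only caveat is the one you already flag, namely the differentiation-under-the-integral regularity needed both for the score identity and for the vanishing mean score, which the paper likewise assumes implicitly.
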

\begin{proof}
Taking $\mathbf{H}$ as the nuisance parameter, and applying the results in \cite{GiniRegMengali}, the MCRB  \eqref{eqn:MCRB} follows immediately as a lower bound to the variance of the estimator of $\paramreal_i$.
The simplification of the modified FIM \eqref{eqn:MCRB} to \eqref{eqn:FIM2} under assumptions $A1$-$A3$ is given in Appendix~\ref{app:FIM2}.
\end{proof}

We now apply Lemma~\ref{lem:1} to obtain a lower bound for the estimation of $\phi_{21}$ given $\mathbf{r}_1$ in \eqref{eq:discrete model at S1_real}.
Theorem~\ref{thm:CRBfinal} states the MCRB in general which is then expressed in closed-form for a special case.


\begin{theorem}\label{thm:CRBfinal}
Assume that $S_{i}$ transmits a BRP with basis block $\mathbf{b}_i$ and block rotation angle $\theta_i$, where $i=1,2$, for $M\geq 3$.
Then a lower bound for the variance of any unbiased estimator for $\phi_{21}$ is given by the MCRB
\begin{IEEEeqnarray}{rCl}\label{eqn:MCRBfreq}
\MCRB= \left[\MFIM^{-1}\right]_{ii}
\end{IEEEeqnarray}
where the modified FIM is given by \eqref{eqn:FIM2} with the general system model in \eqref{eqn:signalmodel_corrnoise} specialized to the system model in \eqref{eq:discrete model at S1_real}. Specifically, we use the substitutions
$\widetilde{\bm{\param}}=[\phi_{21},\nuiv^T]^T$ and
$\mathbf{z}(\bm{\paramreal})=\mathbf{A}\nuiv$ as in \eqref{eq:discrete model at S1_real},
which gives%
\footnote{From \eqref{eqn:noisecov}, $\mathbf{R}^{-1}(\mathbf{h}_{01})$ is always invertible. Hence, $\bm{\Gamma}$ exists.}
$\bm{\Gamma}=\mathbb{E}_{\mathbf{h}_{01}}[\mathbf{R}^{-1}(\mathbf{h}_{01})]$.
If $\bm{\Gamma}=k\mathbf{I}$ for some constant $k>0$, 
then the MCRB can be expressed in closed-form as%
\footnote{This special MCRB shall be used to denote an approximation of the MCRB later, hence we use the acronym ACRB.}
%
\begin{IEEEeqnarray}{rCl}\label{eqn:CRB-noCSI}
\ACRB=
\dfrac{6 k}{N[N^{2}-1-3\lambda(\phi_{21}-\phi_{11})](\mathbf{r}_{21}^{H}\mathbf{r}_{21})}
\end{IEEEeqnarray}
where the  {\em degradation function} $\lambda(x)$ is given by
\begin{IEEEeqnarray}{rCl}
\lambda(x)\triangleq\dfrac{\left[M\cos\left(Mx/2\right)
\sin\left(x/2\right)-\sin\left(Mx/2\right)
\cos\left(x/2\right)\right]^{2}}{\sin^{2}\left(x/2\right)\left[M^{2}\sin^{2}\left(x/2\right)-\sin^{2}\left(Mx/2\right)\right]}.
\label{eq:definition of lambda}
\end{IEEEeqnarray}
Clearly, $\ACRB$ strictly increases as $\lambda$ increases.
\end{theorem}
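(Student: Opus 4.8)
The plan is to obtain the general MCRB directly from Lemma~\ref{lem:1} and then specialise to the closed form under $\bm{\Gamma}=k\mathbf{I}$.

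\emph{Part 1 (general MCRB).} This follows almost immediately. I would substitute the real-valued model \eqref{eq:discrete model at S1_real} into the general model \eqref{eqn:signalmodel_corrnoise}, taking $\vparamreal=[\phi_{21},\nuiv^T]^T$ and $\mathbf{z}(\vparamreal)=\mathbf{A}\nuiv$. Matching the coloured-noise structure \eqref{eqn:noisevec}--\eqref{eqn:noisecov} to the model identifies $\mathbf{H}\mathbf{H}^H$ with the covariance $\mathbf{R}(\mathbf{h}_{01})$, so that assumptions $A1$--$A3$ hold: the AWGN $\mathbf{v}_0,\mathbf{v}_1$ are i.i.d.\ and independent of the channel, and $\mathbf{R}^{-1}$ always exists. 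Lemma~\ref{lem:1} then gives $\MFIM=\gamma\,\mathbf{Z}'\bm{\Gamma}\mathbf{Z}'^T$ with $\bm{\Gamma}=\mathbb{E}_{\mathbf{h}_{01}}[\mathbf{R}^{-1}(\mathbf{h}_{01})]$, and $\MCRB=[\MFIM^{-1}]_{ii}$, which is \eqref{eqn:MCRBfreq}.

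\emph{Part 2 (closed form).} With $\bm{\Gamma}=k\mathbf{I}$ the FIM collapses to $\MFIM=\gamma k\,\mathbf{Z}'\mathbf{Z}'^T$. I would partition $\mathbf{Z}'^T=[\,\partial\mathbf{z}/\partial\phi_{21}\ \ \mathbf{A}\,]$, the first column being the derivative with respect to the parameter of interest and the remaining block being $\mathbf{A}$ (since $\partial\mathbf{z}/\partial\nuiv=\mathbf{A}$). Inverting this block matrix by its Schur complement isolates the $(1,1)$ entry as
\[
\MCRB=\frac{1}{\gamma k}\,\frac{1}{\left(\partial\mathbf{z}/\partial\phi_{21}\right)^T\mathbf{P}_{\mathbf{A}}^{\perp}\left(\partial\mathbf{z}/\partial\phi_{21}\right)},\qquad \mathbf{P}_{\mathbf{A}}^{\perp}\triangleq\mathbf{I}-\mathbf{A}(\mathbf{A}^T\mathbf{A})^{-1}\mathbf{A}^T.
\]
Because $\mathbf{G}_{11}$ depends only on $\phi_{11}=\theta_1$, the vector $\partial\mathbf{z}/\partial\phi_{21}$ is the real representation of $j\mathbf{T}\mathbf{G}_{21}\mathbf{r}_{21}$, while $\mathbf{P}_{\mathbf{A}}^{\perp}$ projects orthogonally to the joint column space of $\mathbf{G}_{11}$ and $\mathbf{G}_{21}$; the projection away from $\mathbf{G}_{21}$ supplies the baseline Fisher information and the additional projection away from $\mathbf{G}_{11}$ is what encodes the self-interference. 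The structural simplification is to write $\mathbf{G}_{i1}=\mathbf{g}_{i1}\otimes\mathbf{I}_L$ with $\mathbf{g}_{i1}=[\rho_{i1},\rho_{i1}^2,\dots,\rho_{i1}^M]^T$ and $\mathbf{T}=\mathbf{D}\otimes\mathbf{I}_L$, $\mathbf{D}=\mathrm{diag}(0,\dots,M-1)$. The joint column space equals $\mathrm{span}\{\mathbf{g}_{11},\mathbf{g}_{21}\}\otimes\mathbb{C}^L$, so the projector factors as $\mathbf{P}^{\perp}_{\{\mathbf{g}_{11},\mathbf{g}_{21}\}}\otimes\mathbf{I}_L$ and the quadratic form separates,
\[
\left(\partial\mathbf{z}/\partial\phi_{21}\right)^T\mathbf{P}_{\mathbf{A}}^{\perp}\left(\partial\mathbf{z}/\partial\phi_{21}\right)=c(\phi_{21}-\phi_{11})\,\bigl(\mathbf{r}_{21}^H\mathbf{r}_{21}\bigr),\qquad c(x)=(\mathbf{D}\mathbf{g}_{21})^H\mathbf{P}^{\perp}_{\{\mathbf{g}_{11},\mathbf{g}_{21}\}}(\mathbf{D}\mathbf{g}_{21}).
\]
The dependence is only through $x=\phi_{21}-\phi_{11}$ because every inner product reduces to a Dirichlet-kernel sum $\sum_m e^{jmx}$ (or its index-weighted versions) in the block index $m$.

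\emph{Main obstacle.} The bulk of the work is evaluating $c(x)$ in closed form. I would split $\mathbf{P}^{\perp}_{\{\mathbf{g}_{11},\mathbf{g}_{21}\}}=\mathbf{I}-\mathbf{P}_{\{\mathbf{g}_{11},\mathbf{g}_{21}\}}$ and remove the two spanning directions by Gram--Schmidt. Removing the $\mathbf{g}_{21}$ component uses only the real sums $\|\mathbf{g}_{21}\|^2=M$, $\mathbf{g}_{21}^H\mathbf{D}\mathbf{g}_{21}=\sum_m m$ and $\|\mathbf{D}\mathbf{g}_{21}\|^2=\sum_m m^2$, and yields the self-interference-free term $\tfrac{M(M^2-1)}{12}$, i.e.\ the cubic growth in the block count responsible for the $N^2-1$ factor. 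The additional removal of $\mathbf{g}_{11}$, orthogonalised against $\mathbf{g}_{21}$, contributes the correction: a squared derivative-kernel term divided by the Gram determinant $\|\mathbf{g}_{11}\|^2\|\mathbf{g}_{21}\|^2-|\mathbf{g}_{11}^H\mathbf{g}_{21}|^2=M^2-\sin^2(Mx/2)/\sin^2(x/2)$. Substituting the closed forms of the Dirichlet kernel and its derivative and simplifying is expected to be the hard part, and it reproduces exactly the degradation function \eqref{eq:definition of lambda}: its numerator is the squared derivative-kernel term $[M\cos(Mx/2)\sin(x/2)-\sin(Mx/2)\cos(x/2)]^2$, and the bracket $M^2\sin^2(x/2)-\sin^2(Mx/2)$ is precisely the Gram determinant scaled by $\sin^2(x/2)$. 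Assembling $c(x)$ into the Schur-complement formula and absorbing the scalar prefactors $\gamma$, $k$ and the factor $2$ from the real representation (as in Theorem~\ref{thm:The-CRB-for}) then gives \eqref{eqn:CRB-noCSI}. A useful consistency check is that as $x\to 0$ the Gram determinant vanishes and $\lambda\to\infty$, recovering the unbounded bound of Theorem~\ref{cor:infCRB}; and the closing monotonicity claim ($\ACRB$ increasing in $\lambda$) is immediate, since $\lambda$ enters the denominator with a negative sign.
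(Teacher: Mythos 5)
Your proposal is correct and follows essentially the same route as the paper's own proof in Appendix~\ref{app:CRBfinal}: Part~1 is the paper's argument verbatim (replace the rectangular $\mathbf{K}$ structure by a statistically equivalent square factorization $\mathbf{u}_1=\mathbf{H}(\mathbf{h}_{01})\bm{\epsilon}$ with $\mathbf{H}\mathbf{H}^H=\mathbf{R}(\mathbf{h}_{01})$ so that $A1$--$A3$ hold, then invoke Lemma~\ref{lem:1}), and Part~2 is exactly the paper's reduction of the MCRB under $\bm{\Gamma}=k\mathbf{I}$ to the GCRB computation of Appendix~\ref{app:Proof-of-Theorem 1} with $\mathbf{R}$ replaced by a scaled identity, with your Kronecker factorization, Gram--Schmidt removal of $\mathrm{span}\{\mathbf{g}_{11},\mathbf{g}_{21}\}$, and Dirichlet-kernel sums supplying the ``tedious but straightforward algebraic manipulations'' the paper leaves implicit; this algebra does check out (the orthogonalized cross term is $|S'(x)|^2$ with $S(x)=\sin(Mx/2)/\sin(x/2)$, which is the numerator of \eqref{eq:definition of lambda} divided by $4\sin^4(x/2)$, and the Gram determinant supplies the bracket $M^2\sin^2(x/2)-\sin^2(Mx/2)$). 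One genuine slip, though only in your closing sanity check: as $x\to 0$ the numerator of $\lambda$ vanishes at the same order as the Gram determinant, so $\lambda(x)$ does \emph{not} diverge but tends to the finite positive limit $(M^2-1)/3$ (this is the L'Hospital computation used in the proof of Theorem~\ref{thm:optangle_samef}); the ACRB blows up at $x=0$ because the bracket $[\,\cdot\,-3\lambda(x)]$ in the denominator of \eqref{eqn:CRB-noCSI} tends to zero, not because $\lambda\to\infty$. Finally, a bookkeeping caution when you ``assemble'': your (correct) per-block count gives the quadratic form $\tfrac{M}{12}[M^2-1-3\lambda]\,(\mathbf{r}_{21}^H\mathbf{r}_{21})$ scaled by the precision $k$, so the leading factor is naturally $M$ and $k$ enters the FIM multiplicatively; reconciling this with the $N=ML$ and the numerator $k$ as written in \eqref{eqn:CRB-noCSI} requires care (the paper's appendix compresses precisely this step), so track $L$ and $k$ explicitly rather than asserting the match.
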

\begin{proof}
See Appendix~\ref{app:CRBfinal}.
\end{proof}

\begin{figure}
\centering{}\includegraphics[scale=\SCALE]{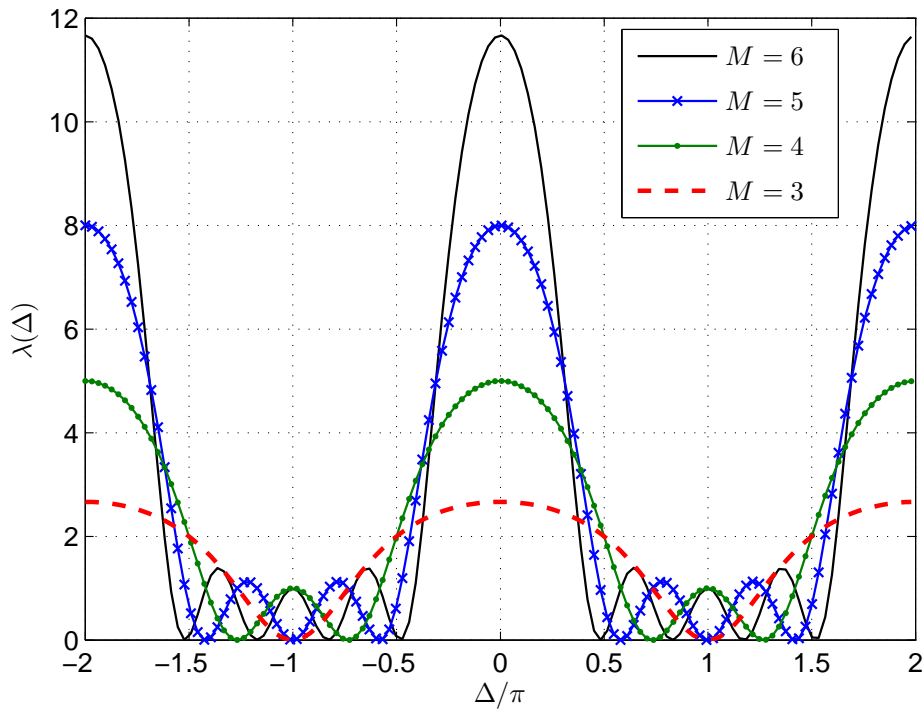}\caption{Graph of degradation function $\lambda(x)$ for $M=3,\cdots, 6$. \label{fig:degradationfn}}
\end{figure}

\begin{rem}\label{rem:degradation}
It can be easily verified that the degradation function $\lambda(\cdot)$ is non-negative, symmetric, i.e., $\lambda(x)=\lambda(-x)$, and periodic with period $2\pi$, i.e, $\lambda(x)=\lambda(x+2\pi k)$ for any integer $k$.
A plot of $\lambda(\cdot)$  is given in Fig.~\ref{fig:degradationfn} for $M\geq 3$.
\end{rem}

Although the general MCRB \eqref{eqn:MCRBfreq} is an exact lower bound, the expression appears to be intractable.
In practice, the matrix $\bm{\Gamma}$ is well approximated by a scaled identity matrix.
For example, based on the simulation conditions in Section~\ref{sec:Simulation-Results-And}, the magnitudes of some rows of $\bm{\Gamma}$ are plotted in Fig.~\ref{fig:near_diag}.
We see that the diagonal elements of $\bm{\Gamma}$ are almost the same, while the off-diagonal elements are very close to zero.
Thus, for our problem of interest, the MCRB in \eqref{eqn:CRB-noCSI} is in fact a good approximation of the exact MCRB given by the non-closed-form \eqref{eqn:MCRBfreq}.
Intuitively, the approximation is accurate because of the following observations. If the matrix $\mathbf{K}$ in \eqref{eqn:noisevec} is a circulant matrix, then from the fact that circulant matrices  are diagonalizable by the Fourier matrix, it can be shown that $\bm{\Gamma}$ equals a scaled identity matrix after performing expectation over $\mathbf{h}_{01}$. Since $\mathbf{K}$ can be written as a circulant matrix plus a sparse matrix with typically small-magnitude entries, we expect that $\mathbf{K}$ is close to being circulant, and both the MCRBs are thus approximately the same.
Further numerical results from Section~\ref{sec:Simulation-Results-And} will support the accuracy of this approximation.

\begin{figure}
\centering{}\includegraphics[scale=\SCALE]{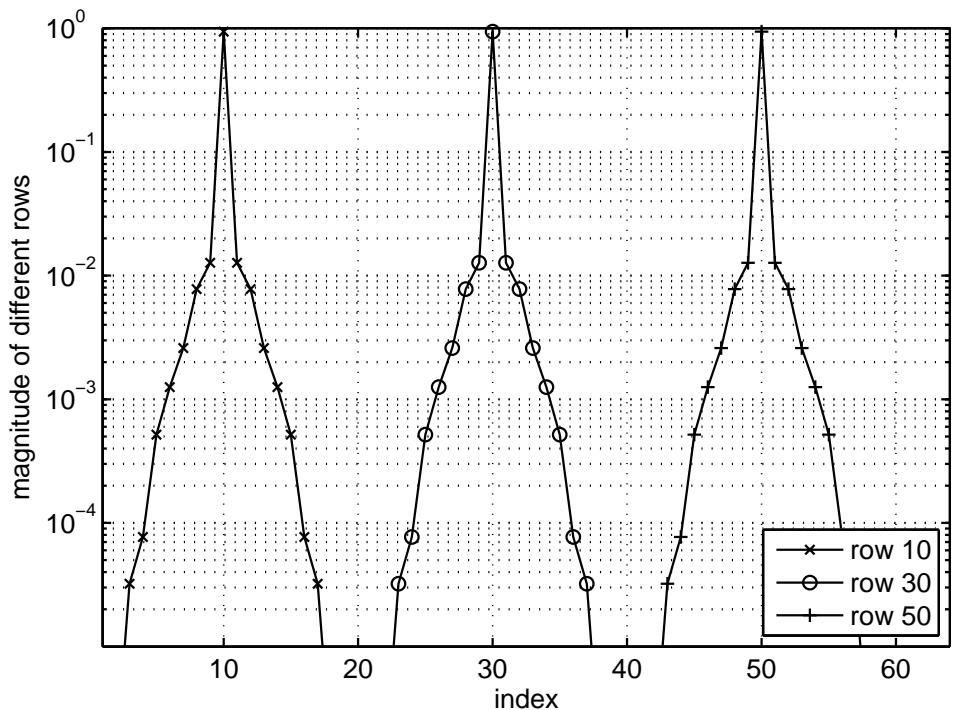}\caption{A plot of the magnitudes of some rows of $\bm{\Gamma}$.  \label{fig:near_diag}}
\end{figure}

Henceforth, we refer to \eqref{eqn:CRB-noCSI} as the approximate CRB (ACRB). As shown in the discussions, the ACRB is in closed-form and provides a good approximation of the MCRB if $\bm{\Gamma}$ is well approximated by a scaled matrix. To obtain further analytical insights, we focus on the ACRB.

Our system model covers the case of one-way relaying viz. Remark~\ref{rem:one-way}. Theorem~\ref{cor:Average Suppose-a-block} states the ACRB for one-way relaying, denoted as $\ACRBoneway$, which serves as the benchmark for two-way relaying.

\begin{theorem}\label{cor:Average Suppose-a-block}
Suppose the BRP is used by source $S_2$ for one-way relaying. Then the ACRB of $\phi_{21}$ is 
\begin{IEEEeqnarray}{rCl}
\ACRBoneway=
\dfrac{6k}{N(N^{2}-1)(\mathbf{r}_{21}^{H}\mathbf{r}_{21})}
\label{eq:average CRB 1-way}
\end{IEEEeqnarray}
and $\ACRBoneway\leq \ACRB$ with equality if and only if the degradation function $\lambda(\cdot)$ equals zero.
\end{theorem}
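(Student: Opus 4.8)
The plan is to prove the two assertions in turn: the closed form \eqref{eq:average CRB 1-way} for the one-way ACRB, and then the inequality $\ACRBoneway \le \ACRB$ together with its equality condition. For the closed form I would start from Remark~\ref{rem:one-way}, which identifies one-way relaying with the model \eqref{eq:discrete model at S1} under $\mathbf{r}_{11}=\mathbf{0}_{L\times 1}$, and then repeat the ACRB derivation of Theorem~\ref{thm:CRBfinal} along the same lines, but with the nuisance vector $\nuiv$ reduced to the real and imaginary parts of $\mathbf{r}_{21}$ only (the self-interference block $\mathbf{G}_{11}$ and its associated parameter $\mathbf{r}_{11}$ are simply dropped from the real-valued model \eqref{eq:discrete model at S1_real}). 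Specializing Lemma~\ref{lem:1} to $\bm{\Gamma}=k\mathbf{I}$ gives a modified FIM of the form $\MFIM=k\gamma\,\mathbf{Z}'{\mathbf{Z}'}^{T}$, whose $(1,1)$ inverse entry is controlled, via a Schur complement, by the quadratic form $\mathbf{r}_{21}^{H}\mathbf{G}_{21}^{H}\mathbf{T}\bigl(\mathbf{I}-\mathbf{G}_{21}(\mathbf{G}_{21}^{H}\mathbf{G}_{21})^{-1}\mathbf{G}_{21}^{H}\bigr)\mathbf{T}\mathbf{G}_{21}\mathbf{r}_{21}$. Using $|\rho_{21}|=1$, the block structure of $\mathbf{G}_{21}$, and $\mathbf{T}=\mbox{diag}(0,1,\ldots,M-1)\otimes\mathbf{I}_{L}$, this form collapses to a centered sum of squares of the block indices multiplying $\mathbf{r}_{21}^{H}\mathbf{r}_{21}$, producing the cubic denominator factor of \eqref{eq:average CRB 1-way}.

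A shorter route, which I would actually present, is to recognize \eqref{eq:average CRB 1-way} as nothing but the two-way expression \eqref{eqn:CRB-noCSI} with its degradation term deleted. Tracing the proof of Theorem~\ref{thm:CRBfinal}, the summand $-3\lambda(\phi_{21}-\phi_{11})$ in the denominator of \eqref{eqn:CRB-noCSI} originates \emph{only} from the Fisher-information coupling between $\phi_{21}$ and the self-interference nuisance $\mathbf{r}_{11}$ carried by $\mathbf{G}_{11}$; this is the averaged analogue of the correction $\mathbf{\Psi}_{1}$ already visible in the genie-aided bound \eqref{eq:CRB complete}. Since one-way relaying removes exactly this nuisance, $\lambda$ is in effect set to zero and \eqref{eqn:CRB-noCSI} reduces to \eqref{eq:average CRB 1-way}. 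I expect the main obstacle to be precisely this identification: verifying rigorously that $\mathbf{r}_{11}$ is the \emph{sole} source of the $-3\lambda$ term, so that dropping it is clean. Concretely, this means checking that the $\phi_{21}$--$\mathbf{r}_{21}$ cross-blocks of the modified FIM are untouched by removing $\mathbf{r}_{11}$, while the $\phi_{21}$--$\mathbf{r}_{11}$ cross-block is what produces $-3\lambda$ upon Schur-complement inversion, with the degradation function \eqref{eq:definition of lambda} emerging from the inner products among the columns of $\mathbf{T}\mathbf{G}_{21}$, $\mathbf{G}_{21}$ and $\mathbf{G}_{11}$.

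Given the closed form, the inequality is immediate. By Remark~\ref{rem:degradation} the degradation function satisfies $\lambda(\cdot)\ge 0$, so the denominator obeys $N^{2}-1-3\lambda(\phi_{21}-\phi_{11})\le N^{2}-1$. Working in the regime where this denominator is positive, so that both bounds are finite and positive (using $k>0$ and $\mathbf{r}_{21}^{H}\mathbf{r}_{21}>0$), the ACRB \eqref{eqn:CRB-noCSI} is strictly increasing in $\lambda$, as noted after Theorem~\ref{thm:CRBfinal}. Comparing \eqref{eqn:CRB-noCSI} with \eqref{eq:average CRB 1-way} then gives $\ACRBoneway\le\ACRB$, and since the two expressions differ only through the factor $-3\lambda$ in the denominator, equality holds if and only if $\lambda(\phi_{21}-\phi_{11})=0$, which is exactly the stated condition.
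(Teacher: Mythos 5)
Your proposal is correct and takes essentially the same route as the paper: the paper's (omitted) proof likewise specializes the derivation of Theorem~\ref{thm:CRBfinal} to the one-way model of Remark~\ref{rem:one-way}, in which dropping $\mathbf{r}_{11}$ removes exactly the $\mathbf{\Psi}_{1}$-type coupling that generates the $-3\lambda(\phi_{21}-\phi_{11})$ term, and it deduces the inequality (with equality iff $\lambda=0$) directly from the non-negativity of the degradation function, just as you do. Your two routes and the explicit Schur-complement bookkeeping simply flesh out what the paper compresses into its two-sentence proof.
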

\begin{proof}
We omit the proof which is similar to that for Theorem~\ref{thm:CRBfinal}. The inequality follows because the degradation function is non-negative.
\end{proof}

The inequality in Theorem~\ref{cor:Average Suppose-a-block} is intuitively expected since in one-way relaying, there is no self-interference to degrade the performance of the estimator.
More interestingly, Theorem~\ref{cor:Average Suppose-a-block} suggests that the ACRB for two-way relaying can achieve the lower bound  if the degradation function can be made to be zero. This observation partly motivates the BRP design in the next section.

\section{Optimization of BRP Parameters}\label{sec:opt}


In this section, we optimize the parameters of the BRP for both sources, namely, the basis blocks $\mathbf{b}_1, \mathbf{b}_2,$ and the block rotation angles $\theta_1, \theta_2$, such that the ACRB is minimized.

We consider the following optimization problem to minimize the ACRB given in \eqref{eqn:CRB-noCSI}:
\begin{IEEEeqnarray}{RLL}\label{eqn:minCRB}
\IEEEyessubnumber\label{eqn:minCRBobj}
\min_{\mathbf{b}_1, \mathbf{b}_2, \theta_1, \theta_2} && \ACRB \\
\IEEEyessubnumber\label{eqn:constraint}
\mbox{subject to } &&  \theta_i\in(0,2\pi] \mbox{ and } \mathbf{b}_i^H \mathbf{b}_i\leq P_i \mbox{ for } i=1,2
\end{IEEEeqnarray}
where $P_i$ represents the power constraint for the BRP sent by source $S_i$.
Although we do not explicitly consider minimizing the peak-to-average power ratio (PAPR) of the BRP, we shall see in Remark~\ref{rem:papr} later the optimal solution also minimizes the PAPR.

From the closed-form solution \eqref{eqn:CRB-noCSI}, the ACRB is proportional to the inverse of $(k-\lambda(\phi_{21}-\phi_{11}))(\mathbf{r}_{21}^{H}\mathbf{r}_{21})$ where $k$ is a positive constant such that the first product is positive.
Thus, the minimization in \eqref{eqn:minCRBobj} depends on these {\em joint} optimization problems 
\begin{IEEEeqnarray}{RCl}\label{eqn:minCRBboth}
\IEEEyessubnumber\label{eqn:minCRB2}
\max_{\mathbf{b}_1, \mathbf{b}_2, \theta_1, \theta_2} &\mathbf{r}_{21}^{H}\mathbf{r}_{21}\\
\IEEEyessubnumber\label{eqn:minCRB1}
\min_{0\leq \Delta < 2\pi} &\lambda(\Delta)
\end{IEEEeqnarray}
where \eqref{eqn:minCRB2} is subject to \eqref{eqn:constraint} and $\Delta\triangleq \theta_{2}-\theta_{1}$ in \eqref{eqn:minCRB1} is the difference of the block rotation angles. 
In \eqref{eqn:minCRB1}, it is sufficient to perform the optimization over $0\leq \Delta< 2\pi$ as $\lambda(\cdot)$ is periodic with period $2\pi$.

Due to the presence of the variables $\theta_{2}, \theta_{1}$ in both optimizations in \eqref{eqn:minCRBboth}, to solve \eqref{eqn:minCRB} optimally we have to consider both optimizations jointly in general.
In Section~\ref{sec:sub:blockbasisopt}, however, we shall see that the optimization in \eqref{eqn:minCRB2} depends only on the basis block $\mathbf{b}_2$; when we optimize the ACRB at source $S_2$ instead of $S_1$ here, the optimization then depends only on $\mathbf{b}_1$ only.
This observation then allows us to decouple the two optimization problems in \eqref{eqn:minCRBboth}.
In Section~\ref{sec:sub:blockangleopt}, we shall thus only consider the optimization of the angle difference $\Delta$ in \eqref{eqn:minCRB1}.
We denote all optimal parameters with the superscript $^{\star}$.

%
%
%

\subsection{Optimization of Basis Blocks}\label{sec:sub:blockbasisopt}

Consider the optimization problem \eqref{eqn:minCRB2} subject to \eqref{eqn:constraint}.
From the definition $r_{21,n} =  h_{21,n}\otimes x_{2n}$ in Section~\ref{sec:System-Model}, 
we can write
$
\mathbf{r}_{21} = \RCSO(\mathbf{b}_2,\theta_2) \mathbf{h}_{12}
$
where
\begin{IEEEeqnarray}{rCl}
\RCSO(\mathbf{b}_2,\theta_2)\triangleq\left[\begin{array}{ccccc}
b_{20} & b_{2,L-1}e^{-j\theta_2} & b_{2,L-2}e^{-j\theta_2} & \cdots & b_{21}e^{-j\theta_2}\\
b_{21} & b_{20} & b_{2,L-1}e^{-j\theta_2} & \cdots & b_{22}e^{-j\theta_2}\\
\vdots & \vdots & \vdots & \ddots & \vdots\\
b_{2,L-1} & b_{2,L-2} & b_{2,L-3} & \cdots & b_{20}
\end{array}\right]
\end{IEEEeqnarray}
depends only on $\mathbf{b}_2,\theta_2$.
Thus the optimization problem \eqref{eqn:minCRB2} becomes
\begin{IEEEeqnarray}{RLL}\label{eqn:minCRB2a}
\IEEEyessubnumber\label{subeqn:minCRB2a1}
\max_{\mathbf{b}_2, \theta_2}  && \left\|\RCSO(\mathbf{b}_2,\theta_2) \mathbf{h}_{12}\right\|^2 \\
\IEEEyessubnumber\label{subeqn:minCRB2a2}
\mbox{subject to } && \theta_2\in(0,2\pi] \mbox{ and } \mathbf{b}_2^H \mathbf{b}_2\leq P_2.
\end{IEEEeqnarray}

Theorem~\ref{thm:optsol1} later states that the optimal solution is given by modifying the well-known CAZAC sequence with some pre-determined phase shifts. A length-$L$ sequence (written as a vector for convenience) $\mathbf{b}=[b_0, \cdots, b_{L-1}]^T$ is said to be a {\em CAZAC sequence} if it satisfies the following properties \cite{Milewski}:
\begin{itemize}
\item constant amplitude: $|b_i|$ equals a constant for $i=0,\cdots,L-1$.
\item cyclic-shift orthogonality: the $i$th cyclic shift of $\mathbf{b}$ is orthogonal to the $j$th cyclic shift of $\mathbf{b}$ for $i\neq j$.
\end{itemize}
Given the CAZAC sequence $\mathbf{b}$ and an angle parameter $\theta$, we define the {\em generalized CAZAC sequence} as
$\mathbf{b}'=[b'_1, \cdots, b'_L]^T$ where $b'_n=b_{n}e^{j n \theta/L}, n=0,\cdots,L-1$.
Clearly, if we choose the angle parameter $\theta=0$, the generalized CAZAC sequence specializes to the conventional CAZAC sequence.


\begin{theorem}\label{thm:optsol1}
For the optimization problem \eqref{eqn:minCRB2a}, the optimal block rotation angle $\theta_2^{\star}$ can take any arbitrary angle, while the optimal basis block $\mathbf{b}_2^{\star}$ is given by the generalized CAZAC sequence with the angle parameter $\theta$ set as the chosen block rotation angle $\theta_2^{\star}$.
\end{theorem}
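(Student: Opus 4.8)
The plan is to expose a hidden circulant structure in $\RCSO(\mathbf{b}_2,\theta_2)$ that turns the seemingly $\theta_2$- and channel-dependent objective into a transparent eigenvalue problem. First I would read off the $(m,n)$ entry of $\RCSO(\mathbf{b}_2,\theta_2)$ from its definition, namely $b_{2,(m-n)\bmod L}$ multiplied by $e^{-j\theta_2}$ exactly on the strictly upper-triangular part. Introducing the ``unrotated'' sequence $c_k \triangleq b_{2,k}e^{-j\theta_2 k/L}$ and the unit-modulus diagonal matrix $\mathbf{D}\triangleq\mathrm{diag}(1,e^{j\theta_2/L},\dots,e^{j\theta_2(L-1)/L})$, a short index computation yields the key factorization
\begin{equation}
\RCSO(\mathbf{b}_2,\theta_2)=\mathbf{D}\,\mathbf{C}(\mathbf{c})\,\mathbf{D}^{-1},
\end{equation}
where $\mathbf{C}(\mathbf{c})$ is the ordinary circulant matrix whose first column is $\mathbf{c}=[c_0,\dots,c_{L-1}]^T$. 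The one nontrivial check is that the wrap-around phase $e^{-j\theta_2}$ on the upper triangle is precisely what the conjugation $\mathbf{D}(\cdot)\mathbf{D}^{-1}$ supplies; this is the heart of the argument, and I would verify it by comparing the cases $n\le m$ and $n>m$ separately.

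With this factorization the objective becomes tractable. Since $\mathbf{D}$ is unitary, $\RCSO^H\RCSO=\mathbf{D}\,\mathbf{C}^H\mathbf{C}\,\mathbf{D}^{-1}$, so the eigenvalues of $\RCSO^H\RCSO$ coincide with those of $\mathbf{C}^H\mathbf{C}$, which (as $\mathbf{C}$ is circulant, hence diagonalized by the DFT) are the squared DFT magnitudes $|\hat c_k|^2$, $k=0,\dots,L-1$. Moreover $\|\mathbf{c}\|^2=\|\mathbf{b}_2\|^2$ because $\mathbf{D}$ and the per-tap phases preserve magnitudes, so the constraint $\mathbf{b}_2^H\mathbf{b}_2\le P_2$ transfers verbatim to $\mathbf{c}$, and by Parseval $\sum_k|\hat c_k|^2=L\|\mathbf{b}_2\|^2\le LP_2$. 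I would then specialize to the generalized CAZAC: when $\mathbf{b}_2$ is the generalized CAZAC with angle parameter $\theta_2$, the definition $b_{2,k}=b_k e^{jk\theta_2/L}$ forces $c_k=b_k$, i.e. $\mathbf{c}$ is exactly the underlying CAZAC sequence, whose cyclic-shift orthogonality makes $\mathbf{C}^H\mathbf{C}=P_2\mathbf{I}$ (equivalently $|\hat c_k|^2\equiv P_2$). Hence $\RCSO^H\RCSO=P_2\mathbf{I}$ and the objective equals $P_2\|\mathbf{h}_{12}\|^2$ irrespective of $\theta_2$ and of the channel direction, which already proves that $\theta_2^\star$ may be taken arbitrarily.

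It remains to argue optimality, and this is the step that needs the most care because $\mathbf{h}_{12}$ is unknown at preamble-design time. For a fixed known channel a genie could do better by concentrating the spectrum of $\mathbf{c}$ on the strongest frequency bin, so optimality must be understood in the channel-robust (worst-case) sense appropriate here. Concretely, the guaranteed value is $\min_{\|\mathbf{h}_{12}\|=1}\|\RCSO\mathbf{h}_{12}\|^2=\lambda_{\min}(\RCSO^H\RCSO)=\min_k|\hat c_k|^2$, and the trace bound gives $\min_k|\hat c_k|^2\le\frac{1}{L}\sum_k|\hat c_k|^2=\|\mathbf{b}_2\|^2\le P_2$, with equality only when every $|\hat c_k|^2$ equals $P_2$ (a flat spectrum at full power). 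The generalized CAZAC attains this bound, so it maximizes the worst-case/channel-independent objective; equivalently it is the choice that makes $\RCSO^H\RCSO$ a scaled identity with the largest feasible scaling, decoupling the received energy from the unknown channel. I expect the main obstacle to be exactly this optimality clause: one must first pin down the intended reading of the objective (worst-case over $\mathbf{h}_{12}$, or the design requirement $\RCSO^H\RCSO\propto\mathbf{I}$) before the generalized CAZAC is singled out, and one should note that flat spectrum alone already yields $\RCSO^H\RCSO=P_2\mathbf{I}$, the additional constant-modulus time-domain property of the CAZAC being what secures the separate PAPR benefit deferred to Remark~\ref{rem:papr}.
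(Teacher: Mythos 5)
Your factorization argument is correct: the index check for $\RCSO(\mathbf{b}_2,\theta_2)=\mathbf{D}\,\mathbf{C}(\mathbf{c})\,\mathbf{D}^{-1}$ goes through (the conjugation supplies exactly the wrap-around phase $e^{-j\theta_2}$ on the strictly upper triangle), and the ensuing reduction to the DFT spectrum of $\mathbf{c}$, the identity $\mathbf{c}=\mathbf{b}$ for the generalized CAZAC, and the conclusion $\RCSO^H\RCSO=P_2\mathbf{I}$ independent of $\theta_2$ are all sound. This is a genuinely different route from the paper: the paper's proof works directly on $\|\sum_m \rcso_m h_{12,m}\|^2$ with a two-step chain it labels as triangle inequality plus Cauchy--Schwarz, arriving at the bound $P\sum_n|h_{12,n}|^2$ and then checking the generalized CAZAC attains it. Your diagonalization buys the complete eigenstructure rather than a single bound, makes the $\theta_2$-independence transparent, and cleanly separates the zero-autocorrelation (flat-spectrum) property, which gives $\RCSO^H\RCSO\propto\mathbf{I}$, from the constant-amplitude property, which only matters for the PAPR claim in Remark~\ref{rem:papr} --- a distinction the paper's proof does not surface.

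More importantly, the reservation you raise about the optimality clause is warranted, and it points at a defect in the paper's own proof rather than in yours. The paper's first inequality, $\|\sum_m \rcso_m h_{12,m}\|^2 \le \sum_m \|\rcso_m h_{12,m}\|^2$, is not the triangle inequality and is false in general; it holds with equality when the columns $\rcso_m$ are mutually orthogonal --- which is precisely the CAZAC property whose optimality is being established, so the chain cannot serve as an upper bound over all feasible $\mathbf{b}_2$. (The subsequent displayed equality $\sum_m\|\rcso_m\|^2 = \sum_m |b_{2m}|^2$ is also off, since each column has norm $\|\mathbf{b}_2\|$.) Your genie observation is confirmed by a concrete example: take $L=2$, $\theta_2=2\pi$ (allowed by the constraint, giving $e^{-j\theta_2}=1$), $\mathbf{b}_2=\sqrt{P_2/2}\,[1,1]^T$ and $\mathbf{h}_{12}=[1,1]^T$; then $\|\RCSO\mathbf{h}_{12}\|^2 = 2|b_{20}+b_{21}|^2 = 4P_2$, strictly exceeding the paper's claimed bound $P_2\|\mathbf{h}_{12}\|^2=2P_2$. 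In your spectral language, $\max \sum_k|\hat c_k|^2|g_k|^2 = LP_2\max_k|g_k|^2 \ge P_2\|\mathbf{h}_{12}\|^2$, with strict inequality unless the channel spectrum is flat, so for a fixed known channel the generalized CAZAC is not optimal. Hence the theorem is defensible only under the channel-robust reading you adopt (worst case over $\mathbf{h}_{12}$, equivalently requiring $\RCSO^H\RCSO\propto\mathbf{I}$ at maximal scale), and your max--min argument via $\min_k|\hat c_k|^2 \le \frac{1}{L}\sum_k|\hat c_k|^2 = \|\mathbf{b}_2\|^2 \le P_2$, with equality iff the spectrum is flat at full power, is a correct proof of that repaired statement. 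So your proposal does not reproduce the paper's proof --- it supersedes it; the one thing to state explicitly if you write this up is that you are replacing the paper's per-channel optimality claim, which as literally stated fails, with the worst-case criterion.
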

\begin{proof}
The objective function in \eqref{subeqn:minCRB2a1} can be upper bounded as follows:
\begin{IEEEeqnarray}{RCl}\label{eqn:obj1}
\left\|\RCSO(\mathbf{b}_2,\theta_2) \mathbf{h}_{12}\right\|^2
&=& \left\|\sum_{m=0}^{L-1} \rcso_m h_{12,m}\right\|^2 \\
&\leq& \sum_{m=0}^{L-1} \left\|\rcso_m h_{12,m}\right\|^2 \\
&\leq& \sum_{m=0}^{L-1} \left\| \rcso_m\right\|^2  \sum_{n=0}^{L-1} \left| h_{12,n}\right|^2
= \sum_{m=0}^{L-1} \left| b_{2m}\right|^2  \sum_{n=0}^{L-1} \left| h_{12,n}\right|^2 \\
&\leq&  P  \sum_{n=0}^{L-1} \left| h_{12,n}\right|^2
\end{IEEEeqnarray}
where $\rcso_i$ is the $i$th column of $\RCSO(\mathbf{b},\theta_2)$; the first and second inequalities follow from the triangle inequality and the Cauchy-Schwarz inequality, respectively; the last inequality is due to the constraint \eqref{subeqn:minCRB2a2}.
Now if we use the generalized CAZAC sequence (with power constraint $P$) as the basis block for the BRP, and we set the angle parameter as the block rotation angle $\theta_2$, then we check that the objective function achieves the above bound for any choice of $\theta_2$.
It follows that the stated solutions are optimal.
\end{proof}

\begin{rem}\label{rem:papr}
Since the generalized CAZAC sequence is obtained from the conventional CAZAC sequence with multiplication of a phase term, it retains the desirable property of having constant amplitude. This minimizes the PAPR of the transmitted sequence which is a desirable property in preamble designs, see e.g. \cite{Minn_cazac,stoica_uniform} which consider the preamble design for point-to-point channels.
\end{rem}

\subsection{Optimization of Block-Rotation Angles}\label{sec:sub:blockangleopt}
%

We have shown in Theorem~\ref{thm:optsol1} that the optimization problem \eqref{eqn:minCRB2} does not depend on the the block rotation angles. To solve \eqref{eqn:minCRB} completely, this section solves the optimization problem \eqref{eqn:minCRB1}.

In Theorem~\ref{cor:infCRB}, we observe that the catastrophic case of unbounded CRB occurs if $f_1=f_2$. Hence, we first focus on this case.
Theorem~\ref{thm:optangle_samef} states the necessary and sufficient conditions for $\Delta^{\star}$ to be optimal assuming $f_1=f_2$.
Theorem~\ref{thm:taylor} next considers the case where $f_2$ approaches $f_1$, but not necessarily $f_1=f_2$.

\begin{theorem}
\label{thm:optangle_samef}
If $f_1=f_2$, the necessary and sufficient conditions for $\Delta$, where $0\leq \Delta<2\pi$, to be an optimal solution for the optimization problem \eqref{eqn:minCRB1} are given by
\begin{IEEEeqnarray}{rCl}\label{eq:rotation angles interf freeall}
\Delta&\neq& 0
\IEEEyessubnumber\label{eq:rotation angles interf free1} \\
M\cos(M\Delta/2) \sin(\Delta/2)&=&\sin(M\Delta/2) \cos(\Delta/2).
\IEEEyessubnumber\label{eq:rotation angles interf free}
\end{IEEEeqnarray}
\end{theorem}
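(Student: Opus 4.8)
The plan is to exploit that the objective $\lambda(\cdot)$ in the optimization \eqref{eqn:minCRB1} is non-negative (Remark~\ref{rem:degradation}), so that minimizing it reduces to locating its zero set, and then to verify that the stated conditions \eqref{eq:rotation angles interf free1}--\eqref{eq:rotation angles interf free} describe exactly that set. Writing $\theta\triangleq\Delta/2$, I read off from \eqref{eq:definition of lambda} that the numerator of $\lambda$ is $[M\cos(M\theta)\sin\theta-\sin(M\theta)\cos\theta]^2$ and its denominator is $\sin^2\theta\,[M^2\sin^2\theta-\sin^2(M\theta)]$. First I would argue that the minimum value of the problem is $0$ and is attained, so that $\Delta$ is optimal if and only if $\lambda(\Delta)=0$; this is also the regime in which the two-way ACRB meets the one-way benchmark $\ACRBoneway$ of Theorem~\ref{cor:Average Suppose-a-block}.

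The central step is to show that the denominator is strictly positive on the open range $0<\Delta<2\pi$, i.e.\ for $0<\theta<\pi$. There $\sin\theta>0$, so it suffices to establish $M^2\sin^2\theta-\sin^2(M\theta)>0$. This I would obtain from the elementary inequality $|\sin(M\theta)|\le M|\sin\theta|$ for integer $M$, sharpened to a \emph{strict} inequality whenever $\sin\theta\ne0$ (provable by induction on $M$ via $\sin(M\theta)=\sin((M-1)\theta)\cos\theta+\cos((M-1)\theta)\sin\theta$). Consequently, for every $\Delta\in(0,2\pi)$ the denominator does not vanish, so no indeterminacy arises and $\lambda(\Delta)=0$ holds if and only if the numerator vanishes, which is precisely condition \eqref{eq:rotation angles interf free}.

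The endpoint $\Delta=0$ demands separate treatment, and this is the main obstacle. The numerator condition \eqref{eq:rotation angles interf free} is also met at $\Delta=0$, but there the denominator vanishes as well, producing a $0/0$ indeterminate form. I would resolve this by a Taylor expansion about $\theta=0$ (equivalently, l'H\^opital's rule), which yields $\lambda(0)=\lim_{\Delta\to0}\lambda(\Delta)=(M^2-1)/3>0$ for $M\ge3$. Hence $\Delta=0$ fails to attain the minimum value $0$ and must be excluded — exactly the extra requirement \eqref{eq:rotation angles interf free1}. This indeterminate-form analysis at the origin, together with the sharp positivity of the denominator above, is where the real work lies; the rest is bookkeeping.

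Finally, to confirm the characterization is non-vacuous (i.e.\ the minimum really is $0$), I would note that $M\cos(M\theta)\sin\theta-\sin(M\theta)\cos\theta=\sin^2\theta\cdot\frac{d}{d\theta}\!\left[\sin(M\theta)/\sin\theta\right]$, so for $0<\theta<\pi$ the numerator vanishes exactly at the critical points of the Dirichlet-type kernel $\sin(M\theta)/\sin\theta$. This smooth function on $(0,\pi)$ has interior critical points for $M\ge3$, so a zero of $\lambda$ exists in $(0,2\pi)$ and the minimum value $0$ is attained. Combining the three facts — $\lambda\ge0$; for $\Delta\ne0$, $\lambda(\Delta)=0$ iff \eqref{eq:rotation angles interf free} holds; and $\lambda(0)>0$ — gives that $\Delta$ solves \eqref{eqn:minCRB1} optimally if and only if both \eqref{eq:rotation angles interf free1} and \eqref{eq:rotation angles interf free} hold, completing the proof.
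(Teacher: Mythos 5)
Your proposal is correct, and it follows the same overall skeleton as the paper's proof — reduce optimality to the zero set of the non-negative function $\lambda(\cdot)$, then exclude $\Delta=0$ via the $0/0$ limit — but it differs in two substantive ways. First, where the paper establishes the non-vanishing of the denominator on $(0,2\pi)$ through the auxiliary Lemma~\ref{lem:roots} of Appendix~\ref{app:lem:roots}, whose proof splits into parity cases (the Dirichlet-kernel identity for odd $M$, an iterated double-angle product for even $M$), you obtain the same fact from the single strict inequality $|\sin(M\theta)|<M|\sin\theta|$ for $\sin\theta\neq 0$, proved by induction on $M$ via the sine addition formula; this is more elementary and handles both parities uniformly. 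Second, you explicitly verify that the characterization is non-vacuous: using the identity that the square root of the numerator equals $\sin^{2}\theta\cdot\frac{d}{d\theta}\left[\sin(M\theta)/\sin\theta\right]$, Rolle's theorem between consecutive zeros $\theta=k\pi/M$ of the kernel guarantees a root of \eqref{eq:rotation angles interf free} in $(0,2\pi)$ for $M\geq 3$, so the minimum value of \eqref{eqn:minCRB1} really is zero. The paper's proof tacitly assumes this (its "it suffices to show $\lambda(\Delta)=0$ iff \eqref{eq:rotation angles interf freeall} holds" presumes the zero set is nonempty) and only exhibits solutions afterwards in the remarks ($\Delta^{\star}=\pi$ for odd $M$, numerical roots for even $M$), so your attainment step closes a small logical gap. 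A minor bonus: your explicit limit $\lambda(0)=(M^{2}-1)/3$, which I have checked by Taylor expansion, sharpens the paper's bare assertion that the limit is strictly positive, and is consistent with the perturbation constants appearing in Theorem~\ref{thm:taylor}.
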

\begin{proof}
Denote the numerator and denominator of $\lambda(\Delta)$ in \eqref{eq:definition of lambda} as $N(\Delta)$ and $D(\Delta)$, respectively, where $0\leq \Delta< 2\pi$.
It is useful to observe that $D(\Delta)=0$ iff $\Delta=0$ for $M\geq 3$; this follows from Lemma~\ref{lem:roots} in Appendix~\ref{app:lem:roots} and that $\sin^2(\Delta/2)=0$ iff $\Delta=0$. Note also that $\lambda(\Delta)=N(\Delta)/D(\Delta) =0$ if $N(\Delta)=0$ and $D(\Delta)\neq 0$.

Since $\lambda(\cdot)\geq 0$, to prove Theorem~\ref{thm:optangle_samef}, it suffices to show that $\lambda(\Delta)=0$ iff \eqref{eq:rotation angles interf freeall} holds.
Note that \eqref{eq:rotation angles interf free1} implies $D(\Delta)\neq 0$, while \eqref{eq:rotation angles interf free} implies $N(\Delta)=0$. Thus, \eqref{eq:rotation angles interf freeall} implies $\lambda(\Delta)=0$.

Next, we show the converse, i.e., $\lambda(\Delta)=0$  implies \eqref{eq:rotation angles interf freeall}.
We first show by contradiction that \eqref{eq:rotation angles interf free1}, i.e., $\Delta\neq 0$, must hold, assuming $\lambda(\Delta)=0$.
Suppose that $\Delta=0$. Then $N(\Delta)=0$, and by the well-known L'Hospital's rule $\lambda(0)=\lim_{\Delta\rightarrow 0} N(\Delta)/D(\Delta)$ is strictly positive.
This contradicts the assumption $\lambda(\Delta)=0$.
We conclude that $\Delta\neq 0$ if  $\lambda(\Delta)=0$. Since $\Delta\neq 0$, from the earlier observation $D(\Delta)\neq 0$.
Thus, $\lambda(\Delta)=0$ implies $N(\Delta)=0$ or equivalently \eqref{eq:rotation angles interf free}. Combining the two conditions $\Delta\neq 0$ and \eqref{eq:rotation angles interf free}, we complete the converse part of the proof.
%
%
\end{proof}
\begin{rem}
If $M\geq 3$ is odd, the solution $\Delta^{\star}=\pi$ satisfies \eqref{eq:rotation angles interf freeall}. By Theorem~\ref{thm:optangle_samef}, $\Delta^{\star}$ is an optimal solution of \eqref{eqn:minCRB1} assuming $f_1=f_2$.
Thus, $\Delta^{\star}$ is independent of $M$, which is desirable if $M$ is not known, e.g., due to uncertainty in timing synchronization.
Intuitively, this choice of $\Delta^{\star}$ creates an artificial frequency offset that is maximally possible, since the degradation function is periodic with period $2\pi$.
\end{rem}

\begin{rem}
If $M\geq 4$ is even, no closed-form optimal solution is readily available. For small $M$, we solve \eqref{eq:rotation angles interf freeall} to obtain
\begin{IEEEeqnarray}{rCl}
\Delta^{\star}(M)=\left\{ \begin{array}{ll}
\arccos\left(-2/3\right), & \mbox{for }M=4,\\
\arccos\left(-(9+d_1)/(12+d_1)\right), & \mbox{for }M=6,\\
\arccos\left(-(4+d_2)/(5+d_1)\right), & \mbox{for }M=8,
\end{array}\right.
\end{IEEEeqnarray}
where $d_1=2\sqrt{21}$ and $d_2=2\sqrt{15}\cos\left(\arccos\left(19/\left(5\sqrt{15}\right)\right)/3\right)$.
For arbitrary even $M$, a good approximate solution can be heuristically obtained as $\widetilde{\Delta}(M)=\left(1-M/(M^{2}-1)\right)\pi.$
Numerical results based on $\widetilde{\Delta}(M)$ are  given in Fig.~\ref{fig:approxzero}, which shows that the degradation function $\lambda(\widetilde{\Delta}(M))$ is very close to zero and decreases quickly with increasing $M$.
\end{rem}

\begin{figure}
\centering{}\includegraphics[scale=\SCALE]{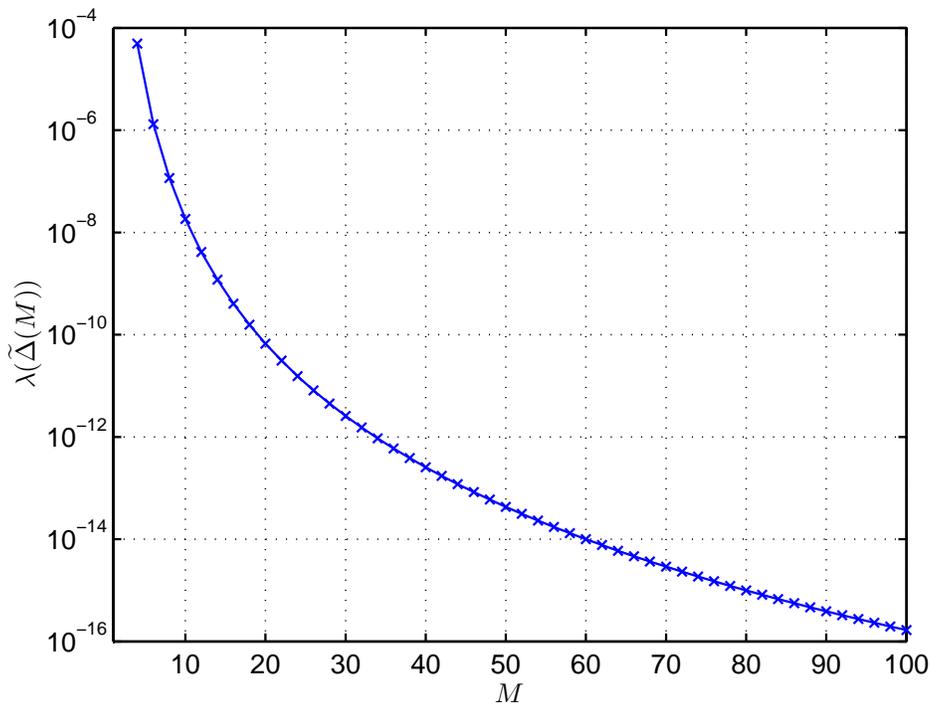}\caption{Graph of degradation function $\lambda(\widetilde{\Delta}(M))$ vs number of basis blocks $M$ for even $M\geq 4$, where $\widetilde{\Delta}(M)=(1-M/(M^{2}-1))\pi$ approximates the optimal solution $\Delta^{\star}(M)$. \label{fig:approxzero}}
\end{figure}

We next consider the case where $f_2\approx f_1$ but $f_2\neq f_1$. We expect the CFO to be small because the carrier frequencies are typically selected to be close to some designated carrier frequency, or a ranging procedure is performed before two-way relaying to calibrate the carrier frequencies.
Theorem~\ref{thm:taylor} obtains the optimal $\Delta^{\star}$ that satisfies \eqref{eqn:minCRB1} such that it also minimizes the (small) perturbation of the degradation function around the neighborhood of $f_1$ and $f_2$.
To obtain an explicit closed-form solution, we focus on $M$ that is odd.
\begin{theorem}\label{thm:taylor}
Let $\delta=2\pi(f_2-f_1)L$ and $\Delta=\theta_2-\theta_1$. 
For small $\delta$, the degradation function is given by 
\begin{IEEEeqnarray}{rCl}\label{eqn:taylorlambda}
\lambda(\Delta+\delta)&=& 
\lambda(\Delta)+ p_1 \delta + p_2 \delta^{2}+\mathcal{O}(\delta^{3}),
\end{IEEEeqnarray}
where $p_1=0$ and $p_2=\dfrac{M^{2}-1}{4\sin^2(\Delta/2)}$.
For odd $M\geq 3$, the optimal $\Delta^{\star}$ that satisfies \eqref{eqn:minCRB1} and minimizes the second-order perturbation term $p_2$ is uniquely given by $\Delta^{\star}=\pi$.
\end{theorem}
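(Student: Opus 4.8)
The plan is to expand $\lambda$ directly about a constrained optimal angle $\Delta$---one satisfying the conditions of Theorem~\ref{thm:optangle_samef}---by exploiting the fact that the numerator of \eqref{eq:definition of lambda} is a perfect square. I would write $\lambda(x) = g(x)^2/D(x)$, where $g(x) \triangleq M\cos(Mx/2)\sin(x/2) - \sin(Mx/2)\cos(x/2)$ is the base of that square and $D(x)$ is the denominator of \eqref{eq:definition of lambda}. Because $\Delta$ satisfies \eqref{eq:rotation angles interf free} we have $g(\Delta) = 0$, and because $\Delta \neq 0$ the observation in the proof of Theorem~\ref{thm:optangle_samef} gives $D(\Delta) \neq 0$. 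The key consequence is that $g(\Delta+\delta) = g'(\Delta)\delta + \mathcal{O}(\delta^2)$, so $g(\Delta+\delta)^2 = g'(\Delta)^2\delta^2 + \mathcal{O}(\delta^3)$ begins only at second order; dividing by $D(\Delta+\delta) = D(\Delta) + \mathcal{O}(\delta)$ then forces the constant and linear terms of $\lambda(\Delta+\delta)$ to vanish. This delivers $p_1 = 0$ and $p_2 = g'(\Delta)^2/D(\Delta)$ in one stroke, with no second derivative required.

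Next I would carry out the derivative of $g$. The pleasant feature here is that, upon differentiating, the two $\cos(Mx/2)\cos(x/2)$ contributions cancel, leaving the clean form $g'(x) = -\frac{M^2-1}{2}\sin(Mx/2)\sin(x/2)$. Substituting into $p_2 = g'(\Delta)^2/D(\Delta)$, with $D(\Delta) = \sin^2(\Delta/2)[M^2\sin^2(\Delta/2) - \sin^2(M\Delta/2)]$, gives the raw expression $p_2 = \frac{(M^2-1)^2\sin^2(M\Delta/2)}{4[M^2\sin^2(\Delta/2) - \sin^2(M\Delta/2)]}$.

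The hard part will be reducing this raw expression to the stated closed form; this is where the optimality constraint must be invoked. I would square \eqref{eq:rotation angles interf free} to obtain $M^2\cos^2(M\Delta/2)\sin^2(\Delta/2) = \sin^2(M\Delta/2)\cos^2(\Delta/2)$, then eliminate the cosines using $\cos^2 = 1 - \sin^2$ on both $M\Delta/2$ and $\Delta/2$ to derive the identity $\sin^2(M\Delta/2) = \frac{M^2\sin^2(\Delta/2)}{1 + (M^2-1)\sin^2(\Delta/2)}$. Feeding this identity back into the raw $p_2$ collapses the denominator and yields the claimed coefficient $p_2 = \frac{M^2-1}{4\sin^2(\Delta/2)}$; this algebraic elimination is the only genuinely delicate step.

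For the final optimization I would observe that, over the feasible angles of \eqref{eqn:minCRB1}, the coefficient $p_2$ is strictly decreasing in $\sin^2(\Delta/2)$ since $M^2-1 > 0$, so minimizing $p_2$ is equivalent to maximizing $\sin^2(\Delta/2)$ on $0 \leq \Delta < 2\pi$. The unique maximizer on this interval is $\Delta = \pi$, where $\sin^2(\pi/2) = 1$. For odd $M \geq 3$ the angle $\Delta = \pi$ already satisfies \eqref{eq:rotation angles interf free} (as noted just after Theorem~\ref{thm:optangle_samef}) and hence is feasible, so $\Delta^\star = \pi$ is the unique constrained minimizer of $p_2$, which completes the proof.
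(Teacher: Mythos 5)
Your proof is correct, and it arrives at the Taylor coefficients by a genuinely different route than the paper. The paper's proof is a direct computation: it asserts, ``after some calculus and algebraic manipulations,'' that $\lambda'(\Delta)=p_1=0$ and $\lambda''(\Delta)=2p_2$, then Taylor-expands and closes with the same optimization step you use ($p_2\geq (M^2-1)/4$ with equality iff $\Delta=\pi$, which is feasible for odd $M$). You instead exploit the structure of \eqref{eq:definition of lambda}: writing $\lambda=g^2/D$ and using $g(\Delta)=0$, $D(\Delta)\neq 0$ at any angle feasible for \eqref{eqn:minCRB1} (via Theorem~\ref{thm:optangle_samef}), the expansion of $\lambda(\Delta+\delta)$ necessarily starts at order $\delta^2$, so $p_1=0$ and $p_2=g'(\Delta)^2/D(\Delta)$ come for free, with no second derivative of a quotient ever formed; your identity $g'(x)=-\tfrac{M^2-1}{2}\sin(Mx/2)\sin(x/2)$ and the squared-constraint identity $\sin^2(M\Delta/2)=M^2\sin^2(\Delta/2)/\left[1+(M^2-1)\sin^2(\Delta/2)\right]$ are both correct and do collapse $p_2$ to $\tfrac{M^2-1}{4\sin^2(\Delta/2)}$. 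Beyond being cleaner, your route makes explicit a hypothesis that the theorem statement leaves implicit: the coefficients $p_1=0$ and $p_2=(M^2-1)/(4\sin^2(\Delta/2))$ are valid only at angles $\Delta$ satisfying \eqref{eq:rotation angles interf freeall}, not at generic $\Delta$ (for instance, with $M=3$ and $\Delta=\pi/2$ one computes $\lambda'(\Delta)=-1\neq 0$), and your invocation of $g(\Delta)=0$ is exactly where this hypothesis enters. The concluding optimization---$p_2$ strictly decreasing in $\sin^2(\Delta/2)$, unique maximizer $\Delta=\pi$ on $[0,2\pi)$, feasibility of $\pi$ for odd $M$---is identical in both proofs.
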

\begin{proof}
After some calculus and algebraic manipulations, the first and second derivatives of $\lambda(\Delta)$ can be obtained as $\lambda'(\Delta)=p_1=0$ and $\lambda''(\Delta)=2p_2$, respectively.
The Taylor series of $\lambda(\Delta+\delta)$ at $\Delta$ for small $\delta$ can then be obtained as given in \eqref{eqn:taylorlambda}.
We have $p_2\geq (M^{2}-1)/4$ with equality iff $\sin^2(\Delta/2)=1$, i.e., $\Delta=\pi$. The proof is completed by noting that $\Delta=\pi$ also satisfies the optimality in \eqref{eq:rotation angles interf freeall}.
\end{proof}

From Theorem~\ref{thm:taylor}, by choosing odd $M\geq3$ and $\Delta=\pi$, the degradation function is exactly zero if $f_2=f_1$, and the first-order perturbation term $p_1$ is also zero for small perturbation in the CFO $(f_2-f_1)$. Thus, up to a second order perturbation term of the CFO, the angle difference $\Delta^{\star}=\pi$ is optimal.

%

\section{CRB-Preserving Implementation \label{sec:Effect-of-Filtering}}

In this section, we design an estimator that achieves the ACRB that is optimized in Section~\ref{sec:opt}.
Although the analysis of the CRB for two-way relaying is rather involved, we shall show that CFO estimators that achieve the CRB in the point-to-point channel suffices for the two-way relay channel.

To achieve low complexity in implementation, we introduce a simple preprocessing to reduce the received signal vector to a more familiar form.
Consider the linear filter
$\mathbf{Q}\in\mathbb{C}^{NL\times m}$
with the orthogonality property that
\begin{IEEEeqnarray}{rCl}
\mathbf{Q}^{H}\mathbf{G}_{11}=\mathbf{0}_{m\times L},\label{eq:ZF}
\end{IEEEeqnarray}
where the integer $m$, where $1\leq m\leq (N-1)L$, is a design parameter.
The matrix $\mathbf{Q}$ always exists since the nullity of $\mathbf{G}_{11}^H$ (of rank $L$) is $NL-L$.
Applying the linear filter on the received signal vector $\mathbf{r}_{1}$ in \eqref{eq:discrete model at S1}, we get
\begin{IEEEeqnarray}{rCl}
\mathbf{z}_{1}\triangleq\mathbf{Q}^{H}\mathbf{r}_{1}=\mathbf{Q}^{H}\mathbf{G}_{21}\mathbf{r}_{21}+\mathbf{Q}^{H}\mathbf{u}_{1}.\label{eq:single tone 01}
\end{IEEEeqnarray}

We shall use $\mathbf{z}_{1}$, instead of $\mathbf{r}_{1}$, for CFO estimation.
Despite the low complexity, Theorem \ref{thm:linearZFfilter} states that there is no loss of optimality in the GCRB if $\mathbf{Q}$ a {\em CRB-preserving filter} that satisfies these two conditions: (i) $m=(N-1)L$ and (ii) $\mathbf{Q}$ is of full rank.
\begin{theorem}
\label{thm:linearZFfilter}
Consider the class of linear filter $\mathbf{Q}$ with parameter $m$, where $1\leq m\leq (N-1)L$, such that \eqref{eq:ZF} holds.
If a CRB-preserving filter is used, then the GCRB of $\phi_{21}$ using the received signal $\mathbf{r}_{1}$, given by \eqref{eq:CRB complete}, is the same as the GCRB using instead the filtered signals $\mathbf{z}_{1}$ in \eqref{eq:single tone 01}.
\end{theorem}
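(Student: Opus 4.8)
The plan is to compute the GCRB from the filtered observation $\mathbf{z}_{1}$ using exactly the same Fisher-information machinery as in the proof of Theorem~\ref{thm:The-CRB-for}, and then to show that the resulting expression coincides with \eqref{eq:CRB complete}. The conceptual reason the two agree is a sufficiency argument: since $\mathbf{Q}^{H}\mathbf{G}_{11}=\mathbf{0}$, the filter annihilates the self-interference nuisance $\mathbf{r}_{11}$; and when $\mathbf{Q}$ is CRB-preserving its columns form a full-rank basis of the \emph{entire} orthogonal complement of $\mathrm{range}(\mathbf{G}_{11})$, so no component of $\mathbf{r}_{1}$ carrying $\mathbf{r}_{11}$-free information about $\phi_{21}$ is discarded. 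I would make this precise through the information matrix rather than through an abstract sufficiency statement, so that it slots directly into the paper's framework.

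First I would derive the GCRB from $\mathbf{z}_{1}$ in \eqref{eq:single tone 01}. Conditioned on $\mathbf{h}_{01}$, $\mathbf{z}_{1}$ is Gaussian with mean $\mathbf{Q}^{H}\mathbf{G}_{21}\mathbf{r}_{21}$ and parameter-independent covariance $\mathbf{Q}^{H}\mathbf{R}\mathbf{Q}$; crucially the term $\mathbf{G}_{11}\mathbf{r}_{11}$ has vanished, so the only nuisance left is $\mathbf{r}_{21}$. Repeating the block-FIM/Schur-complement computation of Appendix~\ref{app:Proof-of-Theorem 1} --- now with the $\mathbf{G}_{11}$ block absent, hence \emph{without} any $\mathbf{\Psi}_{1}$-type correction --- gives the inverse GCRB as $2\mathbf{r}_{21}^{H}\mathbf{G}_{21}^{H}\mathbf{T}\,\mathbf{\Phi}_{Q}\,\mathbf{T}\mathbf{G}_{21}\mathbf{r}_{21}$, where
\[
\mathbf{\Phi}_{Q}\triangleq \mathbf{L}-\mathbf{L}\mathbf{G}_{21}\big(\mathbf{G}_{21}^{H}\mathbf{L}\mathbf{G}_{21}\big)^{-1}\mathbf{G}_{21}^{H}\mathbf{L},\qquad \mathbf{L}\triangleq \mathbf{Q}\big(\mathbf{Q}^{H}\mathbf{R}\mathbf{Q}\big)^{-1}\mathbf{Q}^{H}.
\]
Here the derivative $\partial\mathbf{G}_{21}/\partial\phi_{21}=j(\mathbf{T}+\mathbf{I})\mathbf{G}_{21}$ contributes only its $\mathbf{T}$ part, because $\mathbf{\Phi}_{Q}\mathbf{G}_{21}=\mathbf{0}$ kills the $\mathbf{I}$ part exactly as $\mathbf{\Phi}_{1}\mathbf{G}_{21}=\mathbf{0}$ does in Theorem~\ref{thm:The-CRB-for}.

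Next I would rewrite the original inverse GCRB \eqref{eq:CRB complete} in the same projection form. Using the standard fact that an $\mathbf{R}^{-1}$-metric projection annihilating a stacked subspace can be assembled sequentially in either order, the two terms of \eqref{eq:CRB complete} combine into $2\mathbf{r}_{21}^{H}\mathbf{G}_{21}^{H}\mathbf{T}\,\mathbf{\Phi}_{11}\,\mathbf{T}\mathbf{G}_{21}\mathbf{r}_{21}$, where $\mathbf{\Phi}_{11}$ annihilates both $\mathbf{G}_{11}$ and $\mathbf{G}_{21}$; obtaining it by first annihilating $\mathbf{G}_{11}$ via $\mathbf{\Phi}^{(11)}\triangleq\mathbf{R}^{-1}-\mathbf{R}^{-1}\mathbf{G}_{11}(\mathbf{G}_{11}^{H}\mathbf{R}^{-1}\mathbf{G}_{11})^{-1}\mathbf{G}_{11}^{H}\mathbf{R}^{-1}$ and then $\mathbf{G}_{21}$ gives
\[
\mathbf{\Phi}_{11}=\mathbf{\Phi}^{(11)}-\mathbf{\Phi}^{(11)}\mathbf{G}_{21}\big(\mathbf{G}_{21}^{H}\mathbf{\Phi}^{(11)}\mathbf{G}_{21}\big)^{-1}\mathbf{G}_{21}^{H}\mathbf{\Phi}^{(11)}.
\]
Comparing with $\mathbf{\Phi}_{Q}$, the whole theorem reduces to the single identity $\mathbf{L}=\mathbf{\Phi}^{(11)}$, i.e.\ that compressing by a full basis of $\mathrm{range}(\mathbf{G}_{11})^{\perp}$ and then whitening reproduces the $\mathbf{R}^{-1}$-metric projection that suppresses the self-interference.

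The main obstacle is precisely this identity, and it is exactly where the two CRB-preserving hypotheses enter. I would prove it by a uniqueness argument: both $\mathbf{L}$ and $\mathbf{\Phi}^{(11)}$ are Hermitian, satisfy $\mathbf{M}\mathbf{R}\mathbf{M}=\mathbf{M}$, and annihilate $\mathbf{G}_{11}$; writing any such $\mathbf{M}$ with range $\mathrm{range}(\mathbf{Q})$ as $\mathbf{M}=\mathbf{Q}\mathbf{B}\mathbf{Q}^{H}$ forces $\mathbf{B}(\mathbf{Q}^{H}\mathbf{R}\mathbf{Q})\mathbf{B}=\mathbf{B}$, hence $\mathbf{B}=(\mathbf{Q}^{H}\mathbf{R}\mathbf{Q})^{-1}$ once $\mathbf{B}$ is invertible. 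The invertibility of $\mathbf{B}$, and the fact that $\mathbf{\Phi}^{(11)}$ has the same range $\mathrm{range}(\mathbf{Q})$, both require that $\mathrm{range}(\mathbf{Q})$ be the entire $(N-1)L$-dimensional orthogonal complement of $\mathrm{range}(\mathbf{G}_{11})$ --- which is guaranteed by condition (i) $m=(N-1)L$ together with condition (ii) that $\mathbf{Q}$ be full rank, since the constraint $\mathbf{Q}^{H}\mathbf{G}_{11}=\mathbf{0}$ already places $\mathrm{range}(\mathbf{Q})$ inside that complement. If either condition failed, $\mathrm{range}(\mathbf{Q})$ would be a strict subspace, $\mathbf{L}$ would have rank below that of $\mathbf{\Phi}^{(11)}$, and the GCRB would strictly increase, which explains why both hypotheses are needed. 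Once $\mathbf{L}=\mathbf{\Phi}^{(11)}$ is in hand, direct substitution yields $\mathbf{\Phi}_{Q}=\mathbf{\Phi}_{11}$, the two inverse GCRBs coincide, and the theorem follows.
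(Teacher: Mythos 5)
Your proof is correct, and it shares the paper's overall skeleton: both arguments reduce the theorem to showing that the filtered-signal kernel $\mathbf{L}=\mathbf{Q}\left(\mathbf{Q}^{H}\mathbf{R}\mathbf{Q}\right)^{-1}\mathbf{Q}^{H}$ coincides with $\mathbf{\Phi}^{(11)}=\mathbf{R}^{-1}-\mathbf{R}^{-1}\mathbf{G}_{11}\left(\mathbf{G}_{11}^{H}\mathbf{R}^{-1}\mathbf{G}_{11}\right)^{-1}\mathbf{G}_{11}^{H}\mathbf{R}^{-1}$ (the paper's $\mathbf{\Omega}_{1}$), after expressing both GCRBs through one functional of a kernel matrix. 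Where you genuinely diverge is the proof of that identity. The paper forms the stacked matrix $\mathbf{X}_{1}=\left[\mathbf{R}^{-1}\mathbf{G}_{11},\ \mathbf{Q}\right]$ and exploits the fact that $\mathbf{Q}^{H}\mathbf{G}_{11}=\mathbf{0}$ makes $\mathbf{X}_{1}^{H}\mathbf{R}\mathbf{X}_{1}$ block-diagonal, so the difference collapses to $\mathbf{\Omega}_{1}-\mathbf{L}=\mathbf{R}^{-1}-\mathbf{X}_{1}\left(\mathbf{X}_{1}^{H}\mathbf{R}\mathbf{X}_{1}\right)^{-1}\mathbf{X}_{1}^{H}$, which vanishes if and only if $\mathbf{X}_{1}$ is square and invertible; this one rank count delivers sufficiency and necessity of $m=(N-1)L$ and full-rank $\mathbf{Q}$ simultaneously. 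You instead argue by uniqueness: both matrices are Hermitian, satisfy $\mathbf{M}\mathbf{R}\mathbf{M}=\mathbf{M}$, and have range equal to the orthogonal complement of $\mathrm{range}(\mathbf{G}_{11})$, and any such matrix is forced to be $\mathbf{Q}\left(\mathbf{Q}^{H}\mathbf{R}\mathbf{Q}\right)^{-1}\mathbf{Q}^{H}$. That route is valid and arguably more illuminating about why the two CRB-preserving hypotheses are exactly what is needed, but it carries one small unproven assertion: that $\mathrm{range}(\mathbf{\Phi}^{(11)})$ equals the full $(N-1)L$-dimensional complement. You should close this by noting $\mathbf{\Phi}^{(11)}=(\mathbf{I}-\mathbf{P})\mathbf{R}^{-1}$ with the oblique idempotent $\mathbf{P}=\mathbf{R}^{-1}\mathbf{G}_{11}\left(\mathbf{G}_{11}^{H}\mathbf{R}^{-1}\mathbf{G}_{11}\right)^{-1}\mathbf{G}_{11}^{H}$ of rank $L$, so $\mathrm{rank}(\mathbf{\Phi}^{(11)})=(N-1)L$, and since $\mathbf{G}_{11}^{H}\mathbf{\Phi}^{(11)}=\mathbf{0}$ the range is the whole complement; this also guarantees invertibility of your $\mathbf{B}$. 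Two of your preliminary steps are actually more explicit than the paper's and worth keeping: the order-exchange of the two sequential $\mathbf{R}^{-1}$-metric projections, which the paper leaves implicit when asserting that \eqref{eq:CRB complete} equals $f(\mathbf{\Omega}_{1})$, and the observation that $\mathbf{\Phi}_{Q}\mathbf{G}_{21}=\mathbf{0}$ discards the $\mathbf{I}$-part of $\partial\mathbf{G}_{21}/\partial\phi_{21}=j(\mathbf{T}+\mathbf{I})\mathbf{G}_{21}$, which justifies the appearance of $\mathbf{T}$ alone in the quadratic form.
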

\begin{proof}
Denote the function
$
f(\mathbf{X})\triangleq\left(2\mathbf{r}_{21}^{H}\mathbf{G}_{21}^{H}\mathbf{T}\left[\mathbf{X}-\mathbf{X}\mathbf{G}_{21}\left(\mathbf{G}_{21}^{H}\mathbf{X}\mathbf{G}_{21}\right)^{-1}\mathbf{G}_{21}^{H}\mathbf{X}\right]\mathbf{T}\mathbf{G}_{21}\mathbf{r}_{21}\right)^{-1}.
$
Following the proof in Appendix \ref{app:Proof-of-Theorem 1},
the GCRB of $\phi_{21}$ using $\mathbf{z}_{1}$
in (\ref{eq:single tone 01}) can be derived as $f(\mathbf{\overline{\mathbf{\Phi}}}_{1})$,
where $\overline{\mathbf{\Phi}}_{1}=\mathbf{Q}\left(\mathbf{Q}^{H}\mathbf{R}\mathbf{Q}\right)^{-1}\mathbf{Q}^{H}.$
The GCRB in (\ref{eq:CRB complete}) can be alternatively expressed as $\GCRB=f(\mathbf{\Omega}_{1})$,
where $\mathbf{\Omega}_{1}\triangleq\mathbf{R}^{-1}-\mathbf{R}^{-1}\mathbf{G}_{11}\left(\mathbf{G}_{11}^{H}\mathbf{R}^{-1}\mathbf{G}_{11}\right)^{-1}\mathbf{G}_{11}^{H}\mathbf{R}^{-1}$.
Consider the difference ${\bm{\Delta}}\triangleq\mathbf{\Omega}_{1}-\overline{\mathbf{\Phi}}_{1}=\mathbf{R}^{-1}-\mathbf{X}_{1}\left(\mathbf{X}_{1}^{H}\mathbf{R}\mathbf{X}_{1}\right)^{-1}\mathbf{X}_{1}^{H}$,
where $\mathbf{X}_{1}\triangleq\left[
\mathbf{R}^{-1}\mathbf{G}_{11},  \mathbf{Q}\right]\in\mathbb{C}^{NL\times(L+m)}$. Since $\mathbf{R}^{-1}$ is of rank $NL$,
whereas $\mathbf{X}_{1}\left(\mathbf{X}_{1}^{H}\mathbf{R}\mathbf{X}_{1}\right)^{-1}\mathbf{X}_{1}^{H}$
is of rank $L+m$, the sufficient and necessary condition for ${\bm{\Delta}}$
to be zero is to make $\mathbf{X}_{1}$ full rank, or equivalently,
$m=(N-1)L$, and $\mathbf{Q}$ is of full rank. When these two conditions hold, $\mathbf{X}_{1}$ is invertible and ${\bm{\Delta}}=\mathbf{0}_{NL\times NL}$. Thus $\overline{\mathbf{\Phi}}_{1}=\mathbf{\Omega}_{1}$, implying the two GCRBs are the same.
\end{proof}


Theorem~\ref{thm:linearZFfilter} suggests that the filtered signal $\mathbf{z}_1$ is as good as the original received signal $\mathbf{r}_1$ for CFO estimation, assuming knowledge of the channel $\mathbf{h}_{01}$. This is supported by numerical results in Section~\ref{sec:Simulation-Results-And} even if $\mathbf{h}_{01}$ is not known.

Next, we give a specific realization of the CRB-preserving filter. Any $\mathbf{Q}$ that spans the nullspace of $\mathbf{G}_{11}^H$ must satisfy \eqref{eq:ZF}.
Thus there are infinitely many possible matrices $\mathbf{Q}$ that result in no loss in CRB.
A specific choice of filter for source $S_1$ that satisfies the conditions in Theorem \ref{thm:linearZFfilter} is
\begin{IEEEeqnarray}{rCl}
\mathbf{\widetilde{Q}}_{1}^{H}=\left[\begin{array}{ccccc}
\rho_{11}\mathbf{I}_{L} & -\mathbf{I}_{L} & \mathbf{0}_L & \cdots & \mathbf{0}_L\\
\mathbf{0}_L & \rho_{11}\mathbf{I}_{L} & -\mathbf{I}_{L} & \ddots& \vdots\\
 \vdots & \ddots & \ddots & \ddots & \mathbf{0}_L\\
\mathbf{0}_L & \cdots & \mathbf{0}_L & \rho_{11}\mathbf{I}_{L} & -\mathbf{I}_{L}.
\end{array}\right]\label{eq:Filter Q}
\end{IEEEeqnarray}
We can interpret the filter $\mathbf{\widetilde{Q}}$ as a blockwise low pass filter with coefficients $\{\rho_{11},-1\}$ that operate on the received signal in two blocks
of $L$ samples.
%
The advantage of using this filter is that it leads to low implementation complexity, comparable to point-to-point communication systems.
Observe that the filtered vector $\mathbf{z}_{1}$ can be interpreted to have an equivalent channel $\mathbf{\widetilde{Q}}^{H}\mathbf{G}_{21}=(1-\rho_{21}) \mathbf{G}_{21}$ with an equivalent Gaussian noise $\mathbf{\widetilde{Q}}_{1}^{H}\mathbf{u}_{1}$.
The signal model is the same as a point-to-point communication system where a periodic preamble is sent, experiences a CFO of $\theta_2$, and is received with Gaussian noise with correlation matrix $\mathbf{\widetilde{Q}}_{1}^{H}\mathbf{\widetilde{Q}}_{1}$. Hence, we can use any CRB-achieving estimator for CFO estimation in point-to-point communication, and yet achieves the CRB for two-way relaying.

\section{Simulation Results And Discussions}\label{sec:Simulation-Results-And}

Without loss of generality, we consider the estimation at source $S_1$.
In this section, we shall see that the proposed BRP design can achieve a mean-squared-error (MSE) performance that is close to the EMCB, which is the fundamental lower bound according to Remark~\ref{rem:boundedcond}. Specifically, the EMCB is obtained numerically by averaging the GCRB assuming knowledge of $\mathbf{h}_{01}$ is available.
To obtain the numerical results for the MSE, we consider two specific estimators, namely the {\em correlator}, see e.g. \cite{popular_correlator}, and the {\em genie-aided maximum likelihood estimator} (GA-MLE). They represent schemes with very low and very high complexity, respectively, and both are commonly used. Both estimators work on the output
$\mathbf{z}_{1}$ of the blockwise linear filter in (\ref{eq:Filter Q}),
obtained based on (\ref{eq:single tone 01}), to take advantage of the fact that the signal is free of self-interference and the link becomes a point-to-point channel.

The correlator estimates $\phi_{21}$ as $\hat{\phi}_{\mathsf{COR}}=\angle\sum_{m=0}^{M-3}\mathbf{z}_{1,m}^{H}\mathbf{z}_{1,m+1}$,
where $\mathbf{z}_{1,m}$ is a column vector that collects elements
of $\mathbf{z}_{1}$ in (\ref{eq:single tone 01}) with indices $\{mL,\ldots,(m+1)L-1\}$.
The GA-MLE, on the other hand, is given the channel $\mathbf{h}_{01}$ and hence knows the noise covariance matrix given by \eqref{eqn:noisecov}. Thus the noise $\mathbf{u}_1$ is known to be Gaussian distributed, and this estimator then performs conventional ML estimation to obtain the CFO.
%
In practice, $\mathbf{h}_{01}$ is not known and hence the estimator provides an optimistic MSE performance that may not be realized in practice.

We assume the following simulation setup.
We use $M=5$ basis blocks and $L=16$ samples in each basis block.
All channel taps experience independent Rayleigh
fading of magnitude regulated by a $8$-tap exponential power delay profile
$k e^{-n/\tau_{\textrm{rms}}}$, where $n$ is the tap index and $k$ is a normalizing constant.
We choose $\tau_{\textrm{rms}}=1$.
For simplicity, it is assumed
that the two sources $S_{1}$ and $S_{2}$ communicate with equal power $P$,
and the variance of the AWGN $\sigma^2_n$ at all receivers are identical.
The signal-to-noise ratio is defined as $P/\sigma^2_n$. The scaling factor at the relay is
set such that the total transmission power is $2P$.
The carrier frequencies of the two sources relative to the carrier frequency of the relay are (arbitrarily) set to $f_{1}=0.001$ and $f_{2}=-0.002$. 

In Figure~\ref{fig:MSE for NCFO=00003D5,rms=00003D1}, we plot the EMCB given by averaging the GCRB for the following three cases:
(i) two-way relaying with periodic preambles (with no marker);
(ii) two-way relaying with  BRP optimized according to Section~\ref{sec:opt} (with marker $``+"$);
(iii) one-way relaying with periodic preambles (with marker $``\times"$).
We observe that the first case has fairly large averaged GCRB, which is expected according to Theorem~\ref{cor:infCRB}.
With the optimized BRP, however, the averaged GCRB is reduced substantially by about twenty times.
Moreover, the averaged GCRB for both two-way relaying and one-way relaying are almost the same when the optimized BRP is used, as suggested by Theorem~\ref{cor:Average Suppose-a-block} based on the ACRB.

Theorem~\ref{thm:linearZFfilter} states that a CRB-preserving filter with the optimized BRP allow the GCRB to be approached. To check this, in Figure~ \ref{fig:MSE for NCFO=00003D5,rms=00003D1} we also plot the MSE for two-way relaying where we use the optimized BRP and the following estimators:
(i) correlator (with marker $``\square"$);
(ii) GA-MLE (with marker $``\circ"$).
We observe that the GA-MLE performs close to the averaged GCRB for two-way relaying with optimized BRP. This is expected when knowledge of $\mathbf{h}_{01}$ is available, according to Theorem~\ref{thm:linearZFfilter}. At high SNR, the correlator performs close to the GA-MLE. This suggests that the additional knowledge of the channel $\mathbf{h}_{01}$ does not improve the MSE performance, and so a low complexity estimator suffices. We note that the correlator outperforms the averaged GCRB at SNRs lower than about $7$~dB. This is because the correlator is not an unbiased estimator as we have numerically confirmed. Nevertheless, in the high-SNR regime the correlator becomes asymptotically unbiased, and so the MSE still becomes lower bounded by the averaged GCRB.

\begin{figure}
\centering{}\includegraphics[scale=\SCALE]
{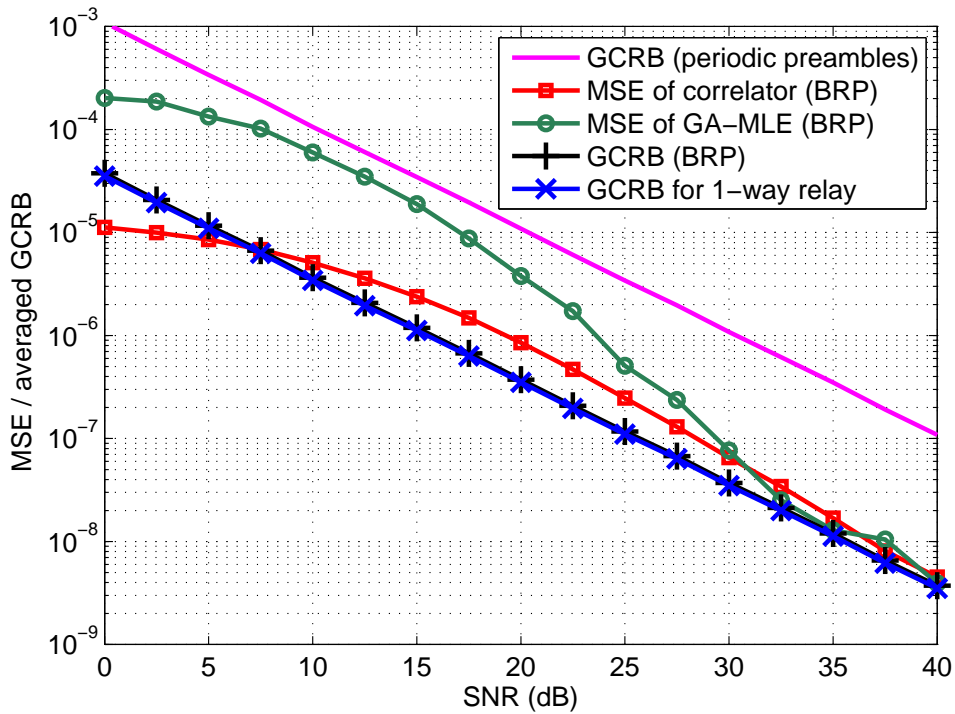}
\caption{Mean-square-error (MSE) and the averaged GCRB for CFO estimation.
\label{fig:MSE for NCFO=00003D5,rms=00003D1}}
\end{figure}

\section{Conclusions\label{sec:Conclusions}}

A novel block-rotation-based preamble (BRP) design for CFO estimation in amplify-and-forward
two-way relaying systems has been proposed. The BRP can be viewed as a generalization of the conventional periodic preamble widely used in practice.
Intuitively, the BRP creates an artificial block-level frequency offset so as to distinguish the carrier frequencies of the two sources. Our analysis on the fundamental lower bound of the MSE performance allows us to identify the catastrophic case when the CRB is unbounded or fails to exist, which has not been identified in the literature so far. Also, our analysis provides practical guidelines to design BRPs that perform close to the fundamental lower bound.
Finally, since the carrier frequency of the relay does not affect the analysis, the proposed BRP design and estimation schemes appear to be readily applicable to communication systems with more than one relay, and also to multiple-antenna communication systems.

\appendices{}
\section{Proof of Theorem~\ref{thm:The-CRB-for}} \label{app:Proof-of-Theorem 1}

For convenience, let
$\mathbf{W}  \triangleq
\left[\begin{array}{cr}
\hspace{-0.1cm} \Re(\mathbf{R}^{-1}) &\hspace{-0.2cm} -\Im(\mathbf{R}^{-1})\hspace{-0.1cm}\\
\hspace{-0.1cm} \Im(\mathbf{R}^{-1}) &\hspace{-0.2cm} \Re(\mathbf{R}^{-1}) \hspace{-0.1cm}
\end{array}\right]$,
$\mathbf{q}  \triangleq  \frac{\partial\mathbf{A}}{\partial\phi_{21}}\cdot \nuiv=\left[\begin{array}{c}
\Re(j\mathbf{T}\mathbf{G}_{21}\mathbf{r}_{21})\\
\Im(j\mathbf{T}\mathbf{G}_{21}\mathbf{r}_{21})
\end{array}\right]
$
and $\mathbf{T} \triangleq \mbox{diag}(0,1,\cdots, M-1)\otimes\mathbf{I}_{L}$.

The parameters to be estimated are $\bm{\param}=[\phi_{21},\mathbf{r}_{11}^T,\mathbf{r}_{21}^T]^{T}$.
Given $\bm{\param}$, and since $\mathbf{h}_{01}$ is given, the received signal $\mathbf{r}_1$ in \eqref{eq:discrete model at S1_real} has the Gaussian distribution
\begin{IEEEeqnarray}{rCl}
f_{\mathbf{r}_{1}}(\mathbf{r}_{1}; \bm{\param})=\pi^{-N}\det(\mathbf{W})\exp\left[-\left(\mathbf{r}-\mathbf{A}\nuiv\right)^{H}\mathbf{W}\left(\mathbf{r}-\mathbf{A}\nuiv\right)\right].
\end{IEEEeqnarray}
The Fisher Information Matrix (FIM) is thus
\begin{IEEEeqnarray}{rCl}\label{eqn:FIM}
\FIM(\mathbf{h}_{01})\triangleq
\mathbb{E}_{\mathbf{r}_{1}}\left(\left[\begin{array}{c}
\dfrac{\partial\log\left[f_{\mathbf{r}_{1}}(\mathbf{r}_{1};\bm{\param})\right]}{\partial\phi_{21}}\\
\dfrac{\partial\log\left[f_{\mathbf{r}_{1}}(\mathbf{r}_{1};\bm{\param})\right]}{\partial\nuiv}
\end{array}\right]\left[\begin{array}{c}
\dfrac{\partial\log\left[f_{\mathbf{r}_{1}}(\mathbf{r}_{1};\bm{\param})\right]}{\partial\phi_{21}}\\
\dfrac{\partial\log\left[f_{\mathbf{r}_{1}}(\mathbf{r}_{1};\bm{\param})\right]}{\partial\nuiv}
\end{array}\right]^{H}\right)=2\left[\begin{array}{cr}
\mathbf{q}^{H}\mathbf{W}\mathbf{q} & \mathbf{q}^{H}\mathbf{W}\mathbf{A} \\
\mathbf{A}^{H}\mathbf{W}\mathbf{q} & \mathbf{A}^{H}\mathbf{W}\mathbf{A}
\end{array}\right].
\IEEEeqnarraynumspace
\end{IEEEeqnarray}
The CRB matrix, given by the inverse of the FIM, can then be obtained. The $(1,1)$th element of the CRB matrix can be isolated,
by the use of the block matrix inversion lemma, to give
\begin{IEEEeqnarray}{rCl}\label{app:eqn:GCRB}
\GCRB =\frac{1}{2\mathbf{q}^{H}\left[\mathbf{W}-\mathbf{W}\mathbf{A}\left(\mathbf{A}^{H}\mathbf{W}\mathbf{A}\right)^{-1}\mathbf{A}^{H}\mathbf{W}\right]\mathbf{q}}.\label{eq:CRB version 0}
\end{IEEEeqnarray}
After some algebraic manipulations, we have
\begin{IEEEeqnarray}{rCl}\label{eq:a}
\mathbf{W}\mathbf{A}\left(\mathbf{A}^{H}\mathbf{W}\mathbf{A}\right)^{-1}\mathbf{A}^{H}\mathbf{W}=\left[\begin{array}{cr}
\Re(\mathbf{M}) & -\Im(\mathbf{M})\\
\Im(\mathbf{M}) & \Re(\mathbf{M})
\end{array}\right]
\end{IEEEeqnarray}
where
\begin{IEEEeqnarray}{rCl}\nonumber
\mathbf{M} & \triangleq & \mathbf{R}^{-1}\left[\begin{array}{cc}
\mathbf{G}_{11} & \mathbf{G}_{21}\end{array}\right]\left(\left[\begin{array}{c}
\mathbf{G}_{11}^{H}\\
\mathbf{G}_{21}^{H}
\end{array}\right]\mathbf{R}^{-1}\left[\begin{array}{cc}
\mathbf{G}_{11} & \mathbf{G}_{21}\end{array}\right]\right)^{-1}\left[\begin{array}{c}
\mathbf{G}_{11}^{H}\\
\mathbf{G}_{21}^{H}
\end{array}\right]\mathbf{R}^{-1}\\ \label{eq:b}
 & = & \mathbf{R}^{-1}\mathbf{G}_{21}\left(\mathbf{G}_{21}^{H}\mathbf{R}^{-1}\mathbf{G}_{21}\right)^{-1}\mathbf{G}_{21}^{H}\mathbf{R}^{-1}+\mathbf{\Phi}_{1}\mathbf{G}_{11}\left(\mathbf{G}_{11}^{H}\mathbf{\Phi}_{1}\mathbf{G}_{21}\right)^{-1}\mathbf{G}_{11}^{H}\mathbf{\Phi}_{1}
\label{eq:c}
\end{IEEEeqnarray}
and $\mathbf{\Phi}_1$ is given in \eqref{eq:one-way relay condition0}.
From \eqref{eq:one-way relay condition0} and \eqref{eq:CRB version 0}--\eqref{eq:c}, we obtain (\ref{eq:CRB complete}) and
(\ref{eq:one-way relay condition}).
From \eqref{eq:one-way relay condition}, clearly $\mathbf{\Phi}_{1}$ satisfies $\mathbf{\Phi}_{1} \mathbf{G}_{21}=\mathbf{0}_{NL\times L}$.

\section{Proof of Theorem~\ref{cor:infCRB}}\label{app:infCRB}
From the Taylor's series $e^{j\theta}=1+j\theta+ \mathcal{O}(\theta^2)$ for small $\theta$, we can express
\begin{IEEEeqnarray}{rCl}\label{eqn:111}
\mathbf{G}_{11}=\left(\mathbf{I}-j(\phi_{21}-\phi_{11}) \mathbf{T}\right)\mathbf{G}_{21} + \mathcal{O}\left((\phi_{21}-\phi_{11})^2 \mathbf{I}_{NL\times L}\right).
\end{IEEEeqnarray}
From Theorem~\ref{thm:The-CRB-for} we have $\mathbf{\Phi}_{1} \mathbf{G}_{21}=\mathbf{0}_{NL\times L}$.  Thus from \eqref{eqn:111} we get
\begin{IEEEeqnarray}{rCl}
{\mathbf{G}}_{11}^{H}\mathbf{\Phi_{1}}{\mathbf{G}}_{11} & = & (\phi_{21}-\phi_{11})^{2}\mathbf{G}_{21}^{H}\mathbf{T}\mathbf{\Phi}_{1}\mathbf{T}\mathbf{G}_{21}
+\mathcal{O}\left((\phi_{21}-\phi_{11})^3 \mathbf{I}_L\right)\\
{\mathbf{G}}_{11}^{H}\mathbf{\Phi_{1}}\mathbf{T}\mathbf{G}_{21} & = & j(\phi_{21}-\phi_{11})\mathbf{G}_{21}^{H}\mathbf{T}\mathbf{\Phi}_{1}\mathbf{T}\mathbf{G}_{21}
+\mathcal{O}\left((\phi_{21}-\phi_{11})^2 \mathbf{I}_L\right),
\end{IEEEeqnarray}
which imply $\mathbf{\Psi}_1=\mathbf{G}_{21}^{H}\mathbf{T}\mathbf{\Phi}_{1}\mathbf{T}\mathbf{G}_{21}+\mathcal{O}\left(|\phi_{21}-\phi_{11}| \mathbf{I}_L\right)$.
Since $\theta_{1}=\theta_{2}=0$, the denominator of $\GCRB$
in (\ref{eq:CRB complete}), i.e., $2\mathbf{r}_{21}^{H}\left(\mathbf{G}_{21}^{H}\mathbf{T}\mathbf{\Phi}_{1}\mathbf{T}\mathbf{G}_{21}-{\mathbf{\Psi}}_{1}\right)\mathbf{r}_{21}$,  approaches zero as $f_{2}- f_{1}=(\phi_{21}-\phi_{12})/(2\pi L)$ approaches zero.
Thus, we have proved $\GCRB\rightarrow\infty$ as $(f_{2}- f_{1})\rightarrow 0$, which holds for any $\nuiv$.

\section{Proof of Corollary~\ref{cor:infCRB2}}\label{app:infCRB2}

Suppose $M=2$, thus $\mathbf{G}_{21}=[\rho_{21} \mathbf{I}_L, \rho_{21}^2 \mathbf{I}_L]^T.$
Let $\mathbf{P}=\frac{1}{2} \left[\mathbf{G}_{21} \mathbf{\breve{G}}_{21}\right]$, where $\breve{\mathbf{G}}_{21}=[ \mathbf{I}_L, -\rho_{21} \mathbf{I}_L]^T$.
Clearly $\mathbf{P}\mathbf{P}^H=\mathbf{I}_{2L}$.
Also, let $\mathbf{\Phi}_{1}=\mathbf{Q}\mathbf{D}\mathbf{Q}^H$ be the SVD of $\mathbf{\Phi}_{1}$.
We first show that we can express
\begin{IEEEeqnarray}{rCl}\label{eqn:223}
\mathbf{\Phi}_1=\breve{\mathbf{G}}_{21} \mathbf{B}_{21} \breve{\mathbf{G}}_{21}^H
\end{IEEEeqnarray}
where  $\mathbf{B}_{21}\triangleq \frac{1}{4} \left(\mathbf{W}_{21}\mathbf{D}_{1}\mathbf{W}_{21}^H
+ \mathbf{W}_{22}\mathbf{D}_{2}\mathbf{W}_{22}^H\right)$.

Since both $\mathbf{P}, \mathbf{Q}$ are unitary matrices, there always exists a unitary matrix $\mathbf{W}$ such that $\mathbf{Q}=\mathbf{P}\mathbf{W}$.
It can then be shown that $\mathbf{Q}^H \mathbf{\Phi}_1\mathbf{G}_{21} = \mathbf{D} \mathbf{Q}^H \mathbf{G}_{21}
= \mathbf{D} \mathbf{W}^H \mathbf{P}^H \mathbf{G}_{21}$.
From Theorem~\ref{thm:The-CRB-for}, we know that $\mathbf{\Phi}_1\mathbf{G}_{21}=\mathbf{0}_{2L\times L}$, i.e., $\mathbf{D} \mathbf{W}^H \mathbf{P}^H \mathbf{G}_{21}=\mathbf{0}_{2L\times L}$. It can be easily verified that $\mathbf{P}^H\mathbf{G}_{21} =\left[\mathbf{I}_L,  \mathbf{0}_L\right]^T$, thus we get
\begin{IEEEeqnarray}{rCl}\label{eqn:222}
\mathbf{D}_i \mathbf{W}_{1i}^H=\mathbf{0}_L, i=1,2,
\end{IEEEeqnarray}
where
$\mathbf{D}=
\left[\begin{array}{cr}
\mathbf{D}_1 & \mathbf{0}_L\\
\mathbf{0}_L & \mathbf{D}_2
\end{array}\right] \in \mathbb{C}^{2L\times 2L}$
and
$\mathbf{W}=
\left[\begin{array}{cr}
\mathbf{W}_{11} & \mathbf{W}_{12}\\
\mathbf{W}_{21} & \mathbf{W}_{22}
\end{array}\right] \in \mathbb{C}^{2L\times 2L}.$
Substituting $\mathbf{Q}=\mathbf{P}\mathbf{W}$ and \eqref{eqn:222} into $\mathbf{\Phi}_{1}=\mathbf{Q}\mathbf{D}\mathbf{Q}^H$ then leads to \eqref{eqn:223}.

Using \eqref{eqn:223}, it can then be verified that $\mathbf{G}_{11}^{H}\mathbf{\Phi}_{1}\mathbf{G}_{11} = \left|1-e^{j(\phi_{21}-\phi_{11})}\right|^{2}\mathbf{B}_{21}$.
It follows that $\mathbf{\Psi}_1=\mathbf{G}_{21}^{H}\mathbf{T}\mathbf{\Phi}_{1}\mathbf{T}\mathbf{G}_{21}$, i.e., the denominator of $\GCRB$
in (\ref{eq:CRB complete}) equals zero, independent of $\mathbf{h}_{01}$.
Thus the CRB is undefined for any channel $\mathbf{h}_{01}$.

\section{Proof of \eqref{eqn:FIM2} in Lemma~\ref{lem:1}}\label{app:FIM2}

Let $\widetilde{\bm{\epsilon}}= \mathbf{H} \bm{\epsilon}.$
We write $g(\vparamreal)= \log f_{\mathbf{r}| \mathbf{H}}(\mathbf{r};\vparamreal|\mathbf{H})$ as
\begin{IEEEeqnarray}{rCl} \label{eqn:MFIM0}
g(\vparamreal)
&=& \log f_{\widetilde{\bm{\epsilon}}}(\mathbf{r}-\mathbf{z}(\vparamreal)| \mathbf{H}) \\
&=& \log |J(\mathbf{H})| \times f_{{\bm{\epsilon}}}(\mathbf{H}^{-1} (\mathbf{r}-\mathbf{z}(\vparamreal))^{-1}) \label{eqn:MFIM1} \\
&=& \log |J(\mathbf{H})| + \sum_{n=1}^{N} \log f_{\epsilon_n}(\epsilon_n) \label{eqn:MFIM2}
\end{IEEEeqnarray}
where \eqref{eqn:MFIM0} follows from the independence of the random variables, \eqref{eqn:MFIM1} follows from the transformation of $\bm{\epsilon}$ to $\widetilde{\bm{\epsilon}}$ via the full-rank $\mathbf{H}$ and $J(\mathbf{H})$ is the corresponding Jacobian, and \eqref{eqn:MFIM2} follows from the i.i.d. assumption of $\bm{\epsilon} =\mathbf{r}-\mathbf{z}(\vparamreal)^{-1}$.
After some algebraic manipulations, we get
\begin{IEEEeqnarray}{rCl}\label{eqn:MFIM3}
\frac{\partial g(\vparamreal)}{\partial \vparamreal} = -\sum_{n=1}^{N} \frac{f'_{\epsilon_n}(\epsilon_n)}{f_{\epsilon_n}(\epsilon_n)} \mathbf{Z}'(\vparamreal)\widetilde{\mathbf{h}}_n
\end{IEEEeqnarray}
where we denote $f'_{\epsilon_n}(\epsilon_n)= \partial f_{\epsilon_n}(\epsilon_n)/\partial \epsilon_n $ and $\widetilde{\mathbf{h}}_n^T$ is the $n$th row of $\mathbf{H}^{-1}$, while the definition of $\mathbf{Z}'(\vparamreal)$ appears after \eqref{eqn:FIM2}.

We now obtain the modified FIM $\MFIM=\mathbb{E}_{\mathbf{r}}
\left[\dfrac{\partial g(\bm{\param})}{\partial\vparamreal}
\dfrac{\partial g(\vparamreal)}{\partial\vparamreal^T}\right]$.
Let $\bm{\rho}_n \triangleq  \frac{f'_{\epsilon_n}(\epsilon_n)}{f_{\epsilon_n}(\epsilon_n)} \widetilde{\mathbf{h}}_n$.
Note that $\mathbb{E}_{{\mathbf{H},\bm{\epsilon}}}[\bm{\rho}_n]=
\mathbb{E}_{\epsilon}\left[\frac{f'_{\epsilon_n}(\epsilon_n)}{f_{\epsilon_n}(\epsilon_n)} \right]
\mathbb{E}_{\mathbf{H}}\left[\widetilde{\mathbf{h}}_n\right]
=
\mathbf{0}_{N\times 1}$ due to $\mathbb{E}_{\epsilon}\left[\frac{f'_{\epsilon_n}(\epsilon_n)}{f_{\epsilon_n}(\epsilon_n)} \right]
= \int_{-\infty}^{\infty} f'_{\epsilon_n}(\epsilon_n)  \,\mathrm{d} \epsilon_n = \partial (\int_{-\infty}^{\infty} f_{\epsilon_n}(\epsilon_n) \,\mathrm{d} \epsilon_n)/\partial \epsilon_n =0$.
Using this observation and \eqref{eqn:MFIM3}, the modified FIM simplifies as
\begin{IEEEeqnarray}{rCl}\label{eqn:FIM3}
\MFIM(\vparamreal) = \mathbf{Z}'(\vparamreal) \cdot \mathbb{E}_{{\mathbf{H},\bm{\epsilon}}}\left[\sum_{n=1}^{N} \bm{\rho}_n \bm{\rho}_n^T \right ]\cdot {\mathbf{Z}'}^T (\vparamreal).
\end{IEEEeqnarray}
In \eqref{eqn:FIM3}, we use the fact that $\mathbf{Z}'(\vparamreal)$ is independent of $\mathbf{H}$ and $\bm{\epsilon}$.
The middle matrix can be written as
\begin{IEEEeqnarray}{rCl}\label{eqn:}
\mathbb{E}_{{\mathbf{H},\bm{\epsilon}}}\left[\sum_{n=1}^{N} \bm{\rho}_n \bm{\rho}_n^T \right ]
&=& \sum_{n=1}^{N} \mathbb{E}_{{\mathbf{H},\bm{\epsilon}}}\left[\left(\frac{f'_{\epsilon_n}(\epsilon_n)}{f_{\epsilon_n}(\epsilon_n)}\right)^2 \widetilde{\mathbf{h}}_n \widetilde{\mathbf{h}}^T_n \right]
= \gamma \bm{\Gamma}
\end{IEEEeqnarray}
where $\gamma \triangleq \mathbb{E}_{\epsilon}\left[\left(\frac{f'_{\epsilon_n}(\epsilon_n)}{f_{\epsilon_n}(\epsilon_n)}\right)^2 \right] $ and
$\bm{\Gamma}\triangleq \sum_{n=1}^{N} \mathbb{E}_{{\mathbf{H}}}\left[ \widetilde{\mathbf{h}}_n \widetilde{\mathbf{h}}^T_n \right]$, and it can be readily checked that $\bm{\Gamma}=\mathbb{E}_{\mathbf{H}}\left[\left(\mathbf{H} \mathbf{H}^H\right)^{-1}\right]$. Hence, we obtain \eqref{eqn:FIM2b}.

\section{Proof of Theorem~\ref{thm:CRBfinal}}\label{app:CRBfinal}


We use the substitution $\widetilde{\bm{\param}}=[\phi_{21},\nuiv^T]^T$ and
$\mathbf{z}(\bm{\paramreal})=\mathbf{A}\nuiv$ as in \eqref{eq:discrete model at S1_real}.
Next, we express the noise vector $\mathbf{u}_1$ in \eqref{eq:discrete model at S1} as $\mathbf{H} \bm{\epsilon}$, such that assumptions $A1$-$A3$ hold.
Given $\mathbf{h}_{01}$, the noise vector in \eqref{eqn:noisevec} is Gaussian distributed with zero mean and covariance matrix $\mathbf{R}(\mathbf{h}_{01})=  \mathbf{K}(\mathbf{h}_{01})  \mathbf{K}^H(\mathbf{h}_{01})+\mathbf{I}$ which is full rank with probability one. Let the eigendecomposition of the covariance matrix be $\mathbf{R}(\mathbf{h}_{01})=\mathbf{U}\bm{\Lambda}\mathbf{U}^H$; also let $\bm{\Lambda}^{1/2}$ be a diagonal matrix with diagonal elements given by the square root of the corresponding diagonal elements in $\bm{\Lambda}$.
Without loss of generality, we can express $\mathbf{u}_1$ in \eqref{eqn:noisevec} instead as
\begin{IEEEeqnarray}{rCl}
\mathbf{u}_1=\mathbf{H}(\mathbf{h}_{01}) \bm{\epsilon},
\end{IEEEeqnarray}
where $\mathbf{H}(\mathbf{h}_{01})
=\mathbf{U}\bm{\Lambda}^{1/2}$ and $\bm{\epsilon}$ is an i.i.d. zero-mean unit-variance complex-valued Gaussian vector that is independent of $\mathbf{h}_{01}$.
This is because both representations of $\mathbf{u}_1$ are statistically equivalent.
Taking $\mathbf{h}_{01}$ to be random in general, we see that assumptions $A1$ to $A4$ always hold.
Applying the system model in \eqref{eqn:signalmodel_corrnoise}, the MFIM is given by \eqref{eqn:FIM2b} where after some tedious but straightforward algebraic manipulations, we obtain $\bm{\Gamma}=\mathbb{E}_{\mathbf{h}_{01}}[\mathbf{R}^{-1}(\mathbf{h}_{01})]$.
This proves the first part of Theorem~\ref{thm:CRBfinal}.

For the second part of the proof, suppose that $\bm{\Gamma}=k\mathbf{I}$ for some constant $k>0$. 
Then the MFIM and the MCRB can be obtained similarly as given by the FIM \eqref{eqn:FIM} and GCRB \eqref{app:eqn:GCRB} in Appendix~\ref{app:Proof-of-Theorem 1}, respectively,  with the covariance matrix $\mathbf{R}$ replaced by $(1/k)\mathbf{I}$.
With this substitution $\mathbf{R}=(1/k)\mathbf{I}$, after some tedious but straightforward algebraic manipulations, we obtain the closed-form expression \eqref{eqn:CRB-noCSI}.

\section{An Auxillary Lemma to Prove Theorem~\ref{thm:optangle_samef}}\label{app:lem:roots}
\begin{lem}\label{lem:roots}
For integer $M\geq 3$, $M^2 \sin^2(x/2) = \sin^2(Mx/2)$ holds iff $x\in \mathcal{S}\triangleq \{x=2k\pi, k\in \mathbb{Z}\}$.
\end{lem}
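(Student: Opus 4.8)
The plan is to substitute $t = x/2$, which reduces the claim to showing that, for integer $M\geq 3$, the identity $M^{2}\sin^{2}t = \sin^{2}(Mt)$ holds if and only if $t\in\pi\mathbb{Z}$; this is equivalent to the stated lemma because $x=2k\pi$ corresponds exactly to $t=k\pi$, and $\mathcal{S}=\{2k\pi\}$. The easy direction is immediate: if $t=k\pi$, then $\sin t = 0$ and $\sin(Mt)=\sin(Mk\pi)=0$, so both sides vanish. The substance of the proof lies in the converse.

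For the converse I would split into two regimes. If $\sin t = 0$ we are already done, so suppose $\sin t \neq 0$ and rewrite the equation as $\bigl|\sin(Mt)/\sin t\bigr| = M$. The key tool is the Dirichlet-kernel expansion
\begin{IEEEeqnarray}{rCl}\label{eqn:dirichlet}
\frac{\sin(Mt)}{\sin t}=\sum_{k=0}^{M-1}e^{i(M-1-2k)t},
\end{IEEEeqnarray}
which I would derive by factoring out $e^{i(M-1)t}$ and summing the geometric series $\sum_{k=0}^{M-1}(e^{-2it})^{k}$; this step is valid precisely because $\sin t\neq 0$ guarantees $e^{-2it}\neq 1$. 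Expression \eqref{eqn:dirichlet} exhibits $\sin(Mt)/\sin t$ as a sum of $M$ complex numbers of unit modulus.

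From here the triangle inequality gives $\bigl|\sin(Mt)/\sin t\bigr|\leq M$, and I would analyze the equality case. Equality in $\bigl|\sum_{k}z_{k}\bigr|=\sum_{k}|z_{k}|$ for unit vectors $z_{k}$ forces all the $z_{k}$ to coincide. Since consecutive summands in \eqref{eqn:dirichlet} differ by the fixed ratio $e^{2it}$, they all coincide if and only if $e^{2it}=1$, i.e. $t\in\pi\mathbb{Z}$, contradicting $\sin t\neq 0$. Thus no solution exists with $\sin t\neq 0$, so every solution satisfies $\sin t=0$, giving $x\in\mathcal{S}$ and completing the argument.

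I expect the only delicate point to be the equality analysis: one must justify that saturating the triangle inequality for a sum of unit vectors forces them to be equal, and then translate \emph{all equal} into the single scalar condition $e^{2it}=1$ by using that the exponents $M-1-2k$ form an arithmetic progression. The hypothesis $M\geq 3$ (indeed any $M\geq 2$ suffices) is exactly what guarantees at least two distinct summands, so that this condition is non-vacuous; for $M=1$ the identity \eqref{eqn:dirichlet} collapses to a single term and the lemma would fail.
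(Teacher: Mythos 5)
Your proof is correct, and while it shares the paper's core idea---expanding the ratio $\sin(Mt)/\sin t$ as a Dirichlet kernel and reading the conclusion off the equality case of a triangle-inequality bound---it executes that idea in a genuinely more unified way. The paper writes the kernel in its real form $1+2\sum_{k=1}^{(M-1)/2}\cos(kx)$, which exists only for odd $M$, and is therefore forced into a parity split: for even $M$ it switches to a double-angle product expansion of $\sin(Mx/2)$, a step that as printed is problematic (the claimed identity $\sin(Mx/2)=M\sin(x)\prod_{k=1}^{M/2}\cos(kx/2)$ fails already at $M=4$, and the resulting bound is stated against $M^{2}\sin^{2}(x)$ rather than the $M^{2}\sin^{2}(x/2)$ appearing in the lemma). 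Your complex-exponential form $\sum_{k=0}^{M-1}e^{i(M-1-2k)t}$, a sum of $M$ unit vectors whose consecutive terms all differ by the same unimodular factor, treats both parities in one stroke: saturation of the triangle inequality forces all summands to coincide, hence $e^{2it}=1$, contradicting $\sin t\neq 0$. The two delicate points you flag---that the geometric-series summation is valid precisely when $\sin t\neq 0$, and that equality for unit vectors forces them all equal---are handled correctly, and your observation that any $M\geq 2$ suffices while $M=1$ fails is also accurate. In short, your argument is valid, avoids the paper's case split, and quietly repairs the defect in the paper's even-$M$ branch.
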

\begin{proof}
We consider the case of odd $M$ and even $M$ separately.

Assume that integer $M\geq 3$ is odd.
We use the well-known identity of the Dirichlet kernel
$
\frac{\sin\left((n+1/2)x\right)}{\sin(x/2)}=1+2\sum_{k=1}^n \cos(kx).
$
By substituting $n=(M-1)/2$ and squaring, we get
\begin{IEEEeqnarray}{rCl}\nonumber
\sin^2(Mx/2) &=&\sin^2(x/2) \left(1+2\sum_{k=1}^{(M-1)/2} \cos(kx) \right)^2\\
&\leq& \left(1+ 2 \cdot (M-1)/2\right)^2  \sin^2(x/2) = M^2 \sin^2(x/2)
\end{IEEEeqnarray}
with equality iff  $\cos(kx)=1$ for $k=1,2,\cdots, (M-1)/2$, i.e., $x\in \mathcal{S}$.

Assume that integer $M\geq 4$ is even.
By the double angle formula $\sin(2x)=2\sin(x)\cos(x),$ we get
\begin{IEEEeqnarray}{rCl}\nonumber
\sin(Mx/2) 
&=& M\sin(x) \prod_{k=1}^{M/2} \cos(kx/2).
\end{IEEEeqnarray}
Thus, $\sin^2(Mx/2) \leq M^2\sin^2(x)$ with equality iff $\cos^2(kx/2)=1$ for $k=1,\cdots, M/2$, i.e., $x\in \mathcal{S}$.
\end{proof}


\end{document}